\setlist{itemjoin ={,\enspace},itemjoin* = { and,\enspace}}
\newcommand{\sqenumi}{\protect\fbox{\small\arabic*}}
\newtcolorbox{gbox}{boxsep=3pt,left=3pt,right=3pt,top=3pt,bottom=3pt,
	colback=red!2,colframe=red!90!black!20,arc=0mm,
	breakable}
\definecolor{shadowcolor}{rgb}{0,.5,.5}
\DeclareFontFamily{U}  			{MnSymbolA}{}
\DeclareFontFamily{U}  			{MnSymbolC}{}
\DeclareSymbolFont{MnSyA}         	{U}  {MnSymbolA}{m}{n}
\DeclareSymbolFont{MnSyC}         	{U}  {MnSymbolC}{m}{n}
\DeclareFontShape{U}{MnSymbolA}{m}{n}{
	<-6>  MnSymbolA5
	<6-7>  MnSymbolA6
	<7-8>  MnSymbolA7
	<8-9>  MnSymbolA8
	<9-10> MnSymbolA9
	<10-12> MnSymbolA10
	<12->   MnSymbolA12}{}
\DeclareFontShape{U}{MnSymbolC}{m}{n}{
	<-6>  MnSymbolC5
	<6-7>  MnSymbolC6
	<7-8>  MnSymbolC7
	<8-9>  MnSymbolC8
	<9-10> MnSymbolC9
	<10-12> MnSymbolC10
	<12->   MnSymbolC12}{}
\DeclareMathSymbol{\hpup}{\mathrel}{MnSyA}{56}
\DeclareMathSymbol{\hpdn}{\mathrel}{MnSyA}{50}
\def\Z								   {{\mathbb Z}}
\def\Zp								   {{{\mathbb Z}^+}}
\def\R								   {{\mathbb R}}
\def\C								   {{\mathbb C}}
\def\D                                 {{\mathcal D}}
\def\MSEhndiff                          {\MSEOp_\Delta\rob{\osfact, \nvec{n}}}
\def\ob								  {1\nobreakdash-Bit\xspace}
\def\BOmFT						  {{\mathcal{B}}_{\OmFT}}
\def\BOmLCT						 {\mathcal{B}_{\Omega_{\mat{\Lambda}}}}
\def\ck								   {c_k}
\def\conv							 {\ast}
\def\csgn							 {\mathsf{csgn}}
\def\DE								  {\stackrel{\rm{def}}{=}}
\def\diffop							 {\Delta}
\def\diffflt						   {v}
\def\dnL							 {m_{\Lmtx}^*}
\def\DFT							 {DFT\xspace}
\def\DTLCT						   {DT\nobreakdash-LCT\xspace}
\def\DLCT							{DLCT\xspace}
\def\IDLCT							 {IDLCT\xspace}
\def\IDTLCT						   {IDT\nobreakdash-LCT\xspace}
\def\idker							  {\varphi}
\def\ind							   {\mathds{1}}
\def\Fr									{FrT\xspace}
\def\FrFT							  {FrFT\xspace}
\def\FT								   {FT\xspace}
\def\gn								   {g\n}
\def\grec							  {\tilde{g}}
\def\HDR							 {HDR\xspace}
\def\O								   {\mathcal{O}}
\def\qL							  	  {q_{\Lmtx}}
\def\qLL							  {q_{\Lmtx,\lambda}}
\def\qLn							  {\qL\n}
\def\qLLn							  {\qLL\n}
\def\qLsecn					  		{\qL^{[2]}\n}
\def\qLLsecn					  	{\qLL^{[2]}\n}
\def\qMB							 {q_{\sf{MB}}}
\def\qMBn							{\qMB\n}
\def\qn								  {q\n}
\def\kL								  {{\kappa_\Lmtx}}
\def\kLinv						 	 {{\kappa_\Linvmtx}}
\def\KL							 	  {K_{\Lmtx}}
\def\L								   {{\mathcal{L}_{\Lmtx}}}
\def\Linv						     {{\mathcal{L}_{\Linvmtx}}}
\def\Lmtx					 	    {\mat{\Lambda}}
\def\LmtxFT					 	   {{\mat{\Lambda}_{\textsf{FT}}}}
\def\Linvmtx					   {{\mat{\Lambda}^{-1}}}
\def\LCS							 {LCS\xspace}
\def\LCT							 {LCT\xspace}
\def\LCTconv					 {\conv_{\Lmtx}}
\def\LFT							 {{\boldsymbol{\Lambda}_{\sf{FT}}}}
\def\LM								 {$\mathbf{\Lambda \Mmod}$\xspace}
\def\l									{\left (}
\def\m								  {\sqb{m}}
\def\Mmod						  {{\mathscr M}}
\def\Mc                             {\mathsf{M}_{\res}^{\nvec{c}}}
\def\Mn                             {\mathsf{M}_{\res}^{\nvec{n}}}
\def\nk								  {n_k}
\def\nosh							{\textit{noise shaping}\xspace}
\def\OmL						    {\Omega_{\Lmtx}}
\def\OmFT						   {\Omega_{\textsf{FT}}}
\def\osfact							{h}
\def\osrat							 {h}
\def\r									{\right )}
\def\n									{\sqb{n}}
\def\un								  {u\sqb{n}}
\def\upL							 {m_{\Lmtx}}
\def\USF							 {USF\xspace}
\def\res							   {\varepsilon}
\def\SD								  {\Sigma\Delta}
\def\SDQ							{$\Sigma\Delta\mathrm{Q}$\xspace}
\def\LSDQ							{$\Lambda\Sigma\Delta\mathrm{Q}$\xspace}
\def\LSDQone						{$\Lambda\Sigma\Delta\mathrm{Q}^\text{[1]}$\xspace}
\def\LSDQtwo						{$\Lambda\Sigma\Delta\mathrm{Q}^\text{[2]}$\xspace}
\def\sgn							 {\text{sgn}}
\def\sinc							  {\mathrm{sinc}}
\def\xn									{x\n}
\def\t									{\rob{t}}
\def\vL								   {{v_{\Lmtx}}}
\def\vLn							  {\vL\n}
\def\w 								   {\rob{\omega}}
\def\bsls							  {\boldsymbol{\Lambda}_{\sf S}}
\def\gt								    {g\t}
\def\MADC							{{${\Mmod}_{\lambda}$-{ADC}}\xspace}
\def\MSEOp						   {\mathbf{\epsilon}}
\def\Tcur								{T \osrat}
\def\MLSDQ						{$\mathscr{M}$\LSDQ}
\def\MLSDQone						{$\mathscr{M}\Lambda\Sigma\Delta\mathrm{Q}^\text{[1]}$\xspace}
\def\MLSDQtwo						{$\mathscr{M}\Lambda\Sigma\Delta\mathrm{Q}^\text{[2]}$\xspace}
\def\SNR                        {\mathsf{SNR}}
\def\Unif                       {\mathsf{Unif}}
\def\rhM                        {\rob{\osrat, \Mn}}
\def\rnc                        {\rob{\Mn, \Mc}}
\def\etal					{\emph{et al.}\xspace}
\def\ie					{\emph{i.e.}\xspace}
\newcommand\abs[1]				   		{\left | #1 \right |}
\newcommand\bpara[1]			  	   {\smallskip \noindent {\bf #1}}
\newcommand\dpara[1]			  	   {\smallskip \noindent \ding{224}}
\newcommand\MSElnc[1]               {\bar{\MSEOp}_{#1}\rob{\nvec{n}, \nvec{c}}}
\newcommand\MSElhn[1]               {\bar{\MSEOp}_{#1}\rob{\osfact, \nvec{n}}}
\newcommand\diff[1]				  		  {\underline{#1}}
\newcommand\diam[1]						{{#1}^{\Diamond}}
\newcommand\defref[1]				   {Definition \ref{#1}}
\newcommand\figref[1]					 {Fig. \ref{#1}}
\newcommand\secref[1]					 {Section \ref{#1}}
\newcommand\fracprt[1]			  	    {\left\llbracket #1 \right\rrbracket}
\newcommand\infnorm[1]				  {\norm{#1}_\infty}
\newcommand\LCTop[1]				  {\hat{#1}_{\Lmtx}}
\newcommand\LCTOpFT[1]				{\hat{#1}_{\LmtxFT}}
\newcommand\mat[1]				   		{\mathbf{#1}}
\newcommand\MSE[2]						{\MSEOp\rob{#1, #2}}
\newcommand\norm[1]				 		{\left\lVert#1\right\rVert}
\newcommand\rob[1]				    	 {\l #1 \r}
\newcommand\sqb[1]				 		 {\left [ #1 \right ]}
\newcommand\tabref[1]					{Table~\ref{#1}}
\newcommand\nvec[1]						{{{\mathbf{#1}}}}
\newcommand\uparr[1]					{ {\overset{\lower0.5em\hbox{$\smash{\scriptstyle	\hpup}$}} 	{{#1}}}}
\newcommand\dnarr[1]					{ {\overset{\lower0.5em\hbox{$\smash{\scriptstyle  	\hpdn}$}} 	{{#1}}}}
\newcommand{\EQc}[1]		{\stackrel{\eqref{#1}}{=}}
\renewcommand \hat \widehat
\renewcommand \tilde \widetilde
\renewcommand \leq \leqslant
\renewcommand \geq \geqslant
\renewcommand \bar \overline
\newtheorem{definition}{Definition}
\newtheorem{lemma}{Lemma}
\newtheorem{proposition}{Proposition}
\newtheorem{theorem}{Theorem}
\begin{document}
    
	\title{1-Bit Unlimited Sampling Beyond Fourier Domain: \\
		Low-Resolution Sampling of Quantization Noise}
	
	\author{Václav Pavlíček and Ayush Bhandari
        \thanks{This work is supported by the UKRI’s HASC Program under grant EP/X040569/1, European Research Council’s Starting Grant for “CoSI-Fold” under grant 101166158 and the UKRI Future Leader’s Fellowship “Sensing Beyond Barriers via Non-Linearities” under grant MR/Y003926/1. Further details on {Unlimited Sensing} and materials on \textit{reproducible research} are available via  \href{https://bit.ly/USF-Link}{\texttt{https://bit.ly/USF-Link}}. We acknowledge computational resources and support provided by the Imperial College Research Computing Service (\href{http://doi.org/10.14469/hpc/2232}{\texttt{http://doi.org/10.14469/hpc/2232}}).}
		\thanks{The authors are with the Dept. of Electrical and Electronic Engineering, Imperial College London, South Kensington, London SW7 2AZ, UK. (Email: \texttt{\{vaclav.pavlicek20,a.bhandari\}@imperial.ac.uk}.}
		\thanks{Manuscript received Mon XX, 20XX; revised Mon XX, 20XX.}
	}

	\markboth{Accepted to IEEE-JSTSP, August~2025}%
	{}

	\maketitle
	
	\vspace{5mm}
	{\color{blue} 
	
	\centering 
	
	Accepted to IEEE Journal of Selected Topics in Signal Processing. \\

	\textit{Special Issue on Low-Bit-Resolution Signal Processing: Algorithms, Implementations, and Applications}
	
	}

		\vspace{10mm}

	\begin{abstract} 
		Analog-to-digital converters (ADCs) play a critical role in digital signal acquisition across various applications, but their performance is inherently constrained by sampling rates and bit budgets. This bit budget imposes a trade-off between dynamic range (DR) and digital resolution, with ADC energy consumption scaling linearly with sampling rate and exponentially with bit depth. To bypass this, numerous approaches, including oversampling with low-resolution ADCs, have been explored. A prominent example is 1-Bit ADCs with Sigma-Delta Quantization (SDQ), a widely used consumer-grade solution. However, SDQs suffer from overloading or saturation issues, limiting their ability to handle inputs with arbitrary DR. The Unlimited Sensing Framework (USF) addresses this challenge by injecting modulo non-linearity in hardware, resulting in a new digital sensing technology. In this paper, we introduce a novel 1-Bit sampling architecture that extends both conventional 1-Bit SDQ and USF. Our contributions are twofold: (1) We generalize the concept of noise shaping beyond the Fourier domain, allowing the inclusion of non-bandlimited signals in the Fourier domain but bandlimited in alternative transform domains. (2) Building on this generalization, we develop a new transform-domain recovery method for 1-Bit USF. When applied to the Fourier domain, our method demonstrates superior performance compared to existing time-domain techniques, offering reduced oversampling requirements and improved robustness. Extensive numerical experiments validate our findings, laying the groundwork for a broader generalization of 1-Bit sampling systems.
	\end{abstract}
	
	\begin{IEEEkeywords}
		Linear Canonical Transform, Noise Shaping, Quantization Noise, Sigma-Delta, Unlimited Sensing.
	\end{IEEEkeywords}
	
	\newpage
	
	\tableofcontents	
	\newpage
	
	\section{Introduction}
	\IEEEPARstart{D}igital acquisition of signals is based on quantization in both time and amplitude dimensions. The well-known Shannon-Nyquist sampling theorem states that, for band-limited signals, time quantization is \emph{lossless} if the sampling rate exceeds a specific threshold, the Nyquist rate. This theorem serves as a bridge between the continuous and discrete realms. However, to fully digitize a signal, amplitude quantization is also necessary. Unlike time quantization, amplitude quantization introduces a permanent loss of information.
	
	In practice, digitization is performed by analog-to-digital converters (ADCs), which carry out both time and amplitude quantization. In this process, ``bits'' act as the digital currency, defining the precision of signals and controlling the extent of information loss during the amplitude quantization. Since ADCs are characterized by their sampling rates and bit budgets, their operation naturally involves fundamental trade-offs.

	\bpara{Trade-off \#1: Sampling Rate vs. Bit Budget.} From a practical perspective, ADCs consume energy that scales linearly with the sampling rate but exponentially \cite{Walden:2002:J} with the number of bits. Therefore, it is far more energy efficient to oversample than to increase the bit depth. This insight has driven the development of techniques that leverage oversampling to improve digital resolution within a fixed bit budget. Notable examples include oversampled ADCs \cite{Thao:1994:J,Verreault:2024:J} and dithering \cite{Schuchman:1964:J}, which redistribute quantization noise to enhance signal fidelity. A notable technological achievement in this context is \emph{one-bit} or {\ob sampling}, which comes in 3 main flavors:
	\begin{enumerate}[label = \roman*)]
		\item Sigma-Delta ($\SD$) Quantization or \SDQ \cite{Inose:1963:J,Aziz:1996:J,Daubechies:2003:J}
		\item Time-Encoded Sampling (or Asynchronous \SDQ) \cite{Lazar:2004:J}
		\item Sign-Based Sampling \cite{Shamai:1994:J,BarShalom:2002:J}.
	\end{enumerate}
	In all these variants, the core idea is to utilize simple, low-complexity hardware combined with high oversampling. This approach achieves effective resolution, even when the ADC resolution is reduced to a single bit.

\begin{wrapfigure}[24]{r}{0.65\textwidth}
 \begin{center}
 \vspace{-4mm}
\includegraphics[width=0.52\textwidth]{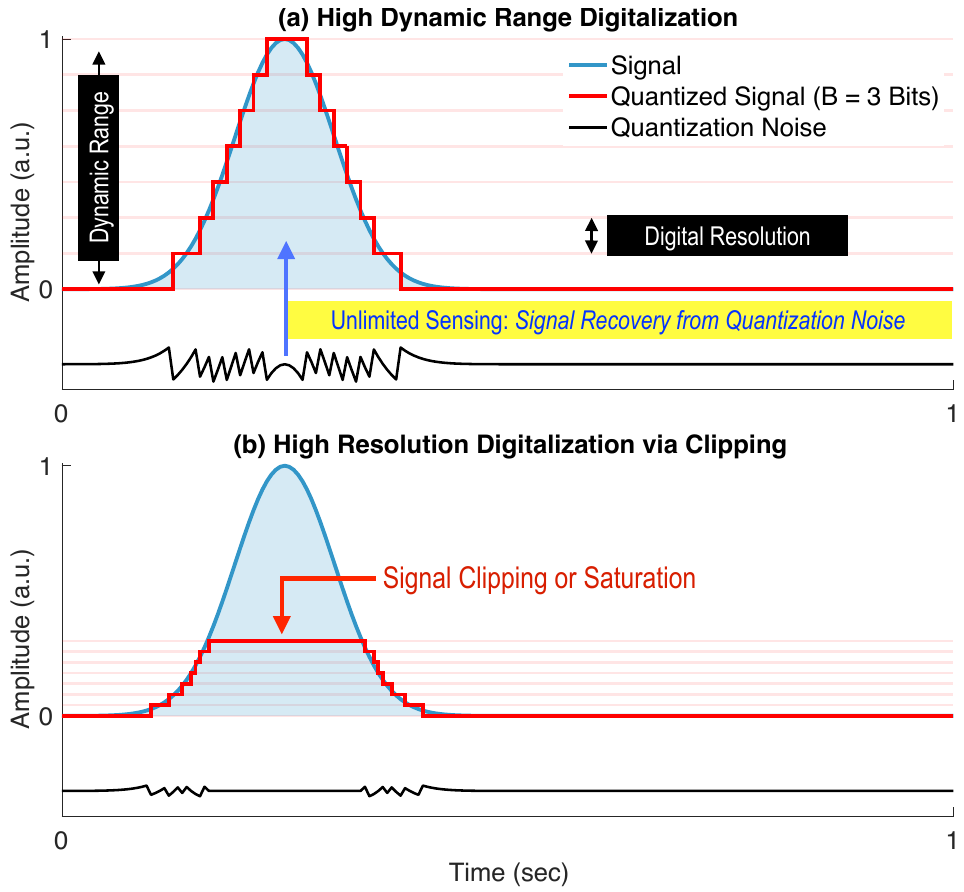}
		\caption{For a fixed bit budget ($3$ bits), the ADCs introduce a trade-off between dynamic range (DR) and digital resolution (DRes). (a) Covering full DR places a limit on the DRes, causing weak signals to be omitted. (b) Higher DRes can be achieved by signal clipping but this results in permanent loss of information.}
\label{fig:tradeoff}
\end{center}
\end{wrapfigure}

\bpara{Trade-off \#2: {Dynamic Range} (DR) vs. {Digital Resolution} (DRes)}. Given a fixed bit budget, it is impossible to simultaneously cover a signal's full dynamic range (DR) and achieve arbitrary digital resolution (DRes). The DR sets a fundamental limit on the voltage quantum, restricting the granularity of amplitude quantization, see \figref{fig:tradeoff}(a). Conversely, increasing the DRes is possible, but only at the expense of clipping the signal's range, see \figref{fig:tradeoff}(b). Both scenarios are suboptimal. In the first case,  signals weaker than the ADC's DRes are lost in the quantizer's null space. In the second case, signal saturation occurs due to range clipping, resulting in loss of information, particularly in pulse-like features. Even when bits are allocated to span the full DR, it has been widely reported \cite{AlSafadi:2012:J,Olofsson:2005:J,Sabharwal:2014:J,Zhang:2016:J} that ADCs remain vulnerable to saturation when high dynamic range (HDR) inputs exceed the ADC's designated DR.
	
	\bpara{Overcoming Trade-offs with Unlimited Sensing.} In recent years, the Unlimited Sensing Framework (USF) \cite{Bhandari:2017:C,Bhandari:2020:Ja,Bhandari:2021:J,Bhandari:2022:J} has emerged as an alternative digital acquisition pipeline capable of overcoming the bottlenecks of conventional ADCs. The USF is built upon a simple yet powerful mathematical insight: \emph{For smooth signals, their fractional part encodes their integer part.} This insight leads to a fundamentally new perspective on sampling: the quantized values of a signal (integer part) can be decoded from its quantization noise (fractional part) \cite{Bhandari:2020:C}. 
	
	Note that the quantization noise (QN)---equivalently, the fractional part or modulo representation---of a bandlimited signal manifests as a non-bandlimited signal (see \figref{fig:tradeoff}(a)). However, surprisingly, the sampling theorem at the core of the USF proves that constant-factor oversampling is sufficient to recover the bandlimited input from the QN or modulo samples. This ensures that time quantization in the USF is \uline{lossless}, akin to the Shannon-Nyquist theorem. Furthermore, with side information, recovery at the Nyquist rate is also possible \cite{Romanov:2019:J}.
	
	The subtlety of the USF lies in the fact that the QN must be acquired in the \emph{analog domain}, rather than in the digital domain, where it is traditionally interpreted. To highlight this distinction, we refer to the analog-domain QN as the \emph{modulo representation}, emphasizing the critical role of non-linearity applied prior to sampling and quantization. Modulo folding is achieved through innovative hardware implementation, resulting in \MADC (modulo ADC) where $\lambda>0$ is the ADC's DR. A variety of hardware validations are provided in \cite{Bhandari:2021:J,Bhandari:2022:J,Florescu:2022:J,Florescu:2022:Ja,Shtendel:2023:J,Guo:2024:J,Zhu:2024:C,Mulleti:2023:J}. The presence of modulo non-linearity prior to the ADC ensures that HDR signals are folded or aliased back into the ADC's DR, preventing saturation. As illustrated in \figref{fig:tradeoff}(a), for a given bit budget, the QN or modulo signal provides significantly higher DRes \cite{Bhandari:2021:J,Ordentlich:2018:J}.
	
\bpara{1-Bit USF.} Similar to conventional ADCs \cite{Verreault:2024:J}, \ob variants \cite{Inose:1963:J,Aziz:1996:J,Daubechies:2003:J,Lazar:2004:J,Shamai:1994:J,BarShalom:2002:J} impose a DR limit on the input signal; exceeding this limit leads to encoding errors and permanent information loss. This defect is overcome by leveraging USF as it folds the signal prior to entering the \ob ADC. While still in its early stages, the USF has been adapted to \ob sampling scheme, and this comes in three flavors (see \figref{fig:usf1bitschemes}).
	\begin{enumerate}[leftmargin=30pt, label=\sqenumi, itemsep = 5pt]
		\item \SDQ: Graf \etal \cite{Graf:2019:C} propose a \SDQ scheme to encode \ob modulo samples of a bandlimited signal. A \uline{time-domain thresholding} algorithm is then devised for residue estimation, which leads to recovery. The authors considered only noise-free modulo signal, and the study of robustness in the presence of noise is missing.
		\item Time-Encoding: Leveraging modulo hysteresis architecture \cite{Florescu:2022:J}, Florescu \& Bhandari \cite{Florescu:2022:Ja} develop recovery methods for \ob time-encoded \cite{Lazar:2004:J} modulo measurements, validated through simulations and hardware implementation. Extensions include system identification \cite{Florescu:2022:Ca}.
		\item Sign-Based \ob Sampling: Recently, Emaz \etal proposed UNO ({Unlimited One-Bit Sampling}) \cite{Eamaz:2024:J,Eamaz:2023:C} where modulo folded signal is encoded with a  sign-based quantizer, resulting in a \ob scheme. Emaz \etal consider a dithered threshold scheme combined with randomized Kaczmarz algorithm for signal recovery. The authors also consider sparse signals in their later papers \cite{Eamaz:2023:Ca}.
	\end{enumerate}
	
	\begin{figure}[!t]
		\centering
		{\fontfamily{\sfdefault}\selectfont \scalebox{0.55}{

			\tikzset{every picture/.style={line width=0.75pt}} 
			
			\begin{tikzpicture}[x=0.75pt,y=0.75pt,yscale=-1,xscale=1]
				
				\draw  [color={rgb, 255:red, 240; green, 193; blue, 97 }  ,draw opacity=1 ][fill={rgb, 255:red, 252; green, 235; blue, 198 }  ,fill opacity=1 ][line width=2.25]  (240,124) .. controls (240,116.27) and (246.27,110) .. (254,110) -- (396,110) .. controls (403.73,110) and (410,116.27) .. (410,124) -- (410,166) .. controls (410,173.73) and (403.73,180) .. (396,180) -- (254,180) .. controls (246.27,180) and (240,173.73) .. (240,166) -- cycle ;
				\draw  [color={rgb, 255:red, 240; green, 193; blue, 97 }  ,draw opacity=1 ][fill={rgb, 255:red, 252; green, 235; blue, 198 }  ,fill opacity=1 ][line width=2.25]  (460,132) .. controls (460,125.37) and (465.37,120) .. (472,120) -- (608,120) .. controls (614.63,120) and (620,125.37) .. (620,132) -- (620,168) .. controls (620,174.63) and (614.63,180) .. (608,180) -- (472,180) .. controls (465.37,180) and (460,174.63) .. (460,168) -- cycle ;
				\draw  [color={rgb, 255:red, 240; green, 193; blue, 97 }  ,draw opacity=1 ][fill={rgb, 255:red, 252; green, 235; blue, 198 }  ,fill opacity=1 ][line width=2.25]  (30,132) .. controls (30,125.37) and (35.37,120) .. (42,120) -- (178,120) .. controls (184.63,120) and (190,125.37) .. (190,132) -- (190,168) .. controls (190,174.63) and (184.63,180) .. (178,180) -- (42,180) .. controls (35.37,180) and (30,174.63) .. (30,168) -- cycle ;
				\draw [color={rgb, 255:red, 128; green, 128; blue, 128 }  ,draw opacity=1 ][line width=1.5]    (108.5,119.99) .. controls (108.67,90.6) and (121.86,71.83) .. (142.84,59.81) .. controls (173.8,42.08) and (222.03,39.26) .. (270.29,38.75) .. controls (278.73,38.66) and (287.17,38.64) .. (295.52,38.62) .. controls (302.84,38.61) and (310.09,38.59) .. (311.31,38.57)(111.5,120.01) .. controls (111.66,91.89) and (124.25,73.91) .. (144.33,62.42) .. controls (174.94,44.89) and (222.64,42.25) .. (270.32,41.75) .. controls (278.75,41.66) and (287.19,41.64) .. (295.53,41.62) .. controls (302.86,41.61) and (310.11,41.59) .. (311.33,41.57) ;
				\draw [shift={(320,40)}, rotate = 179.38] [color={rgb, 255:red, 128; green, 128; blue, 128 }  ,draw opacity=1 ][line width=1.5]    (14.21,-4.28) .. controls (9.04,-1.82) and (4.3,-0.39) .. (0,0) .. controls (4.3,0.39) and (9.04,1.82) .. (14.21,4.28)   ;
				\draw [color={rgb, 255:red, 128; green, 128; blue, 128 }  ,draw opacity=1 ][line width=1.5]    (538.5,119.99) .. controls (538.5,119.82) and (538.5,119.64) .. (538.5,119.47) .. controls (538.5,92.55) and (526.43,74.99) .. (506.86,63.54) .. controls (474.36,44.53) and (421.54,42.13) .. (368.77,41.72) .. controls (361.16,41.66) and (353.55,41.64) .. (346,41.62) .. controls (337.22,41.61) and (328.51,41.59) .. (319.98,41.5)(541.5,120.01) .. controls (541.5,119.83) and (541.5,119.65) .. (541.5,119.47) .. controls (541.5,91.29) and (528.85,72.93) .. (508.37,60.95) .. controls (475.52,41.73) and (422.17,39.13) .. (368.79,38.72) .. controls (361.18,38.66) and (353.56,38.64) .. (346.01,38.62) .. controls (337.23,38.61) and (328.54,38.59) .. (320.02,38.5) ;
				\draw [color={rgb, 255:red, 128; green, 128; blue, 128 }  ,draw opacity=1 ][line width=1.5]    (321.5,40) -- (321.5,110)(318.5,40) -- (318.5,110) ;
				\draw  [color={rgb, 255:red, 225; green, 189; blue, 203 }  ,draw opacity=1 ][fill={rgb, 255:red, 247; green, 239; blue, 237 }  ,fill opacity=1 ][line width=2.25]  (250,40) .. controls (250,23.43) and (281.34,10) .. (320,10) .. controls (358.66,10) and (390,23.43) .. (390,40) .. controls (390,56.57) and (358.66,70) .. (320,70) .. controls (281.34,70) and (250,56.57) .. (250,40) -- cycle ;
				
				\draw (110,150) node  [font=\normalsize,color={rgb, 255:red, 139; green, 87; blue, 42 }  ,opacity=1 ] [align=left] {\begin{minipage}[lt]{99.65pt}\setlength\topsep{0pt}
						\begin{center}
							\textbf{Sigma-Delta ($\boldsymbol{\Sigma\Delta\mathsf{Q}}$)}\\\cite{Graf:2019:C}
						\end{center}
						
				\end{minipage}};
				\draw (325,145) node  [font=\normalsize,color={rgb, 255:red, 139; green, 87; blue, 42 }  ,opacity=1 ] [align=left] {\begin{minipage}[lt]{85pt}\setlength\topsep{0pt}
						\begin{center}
							\textbf{Time-Encoding }\\\textbf{(Event-Diven)}\\\cite{Florescu:2021:C, Florescu:2022:C, Florescu:2022:Ja}
						\end{center}
						
				\end{minipage}};
				\draw (540,150) node  [font=\normalsize,color={rgb, 255:red, 139; green, 87; blue, 42 }  ,opacity=1 ] [align=left] {\begin{minipage}[lt]{88.86pt}\setlength\topsep{0pt}
						\begin{center}
							\textbf{Signum Sampling}\\\cite{Eamaz:2023:C, Eamaz:2023:Ca, Eamaz:2024:J}
						\end{center}
						
				\end{minipage}};
				\draw (320,40) node  [font=\large,color={rgb, 255:red, 209; green, 46; blue, 12 }  ,opacity=1 ] [align=left] {\textbf{1-Bit USF}};

			\end{tikzpicture}
		}}
		\caption{Classification of 1-Bit USF schemes in literature.}
		\label{fig:usf1bitschemes}
	\end{figure}
	
	\bpara{Motivation and Technical Challenges.} The emergence of both \ob sampling \cite{Verreault:2024:J} and the USF, along with their combination \cite{Graf:2019:C,Florescu:2022:Ja,Eamaz:2024:J}, has led to significant advancements. However, two key challenges still hinder practical applications.  
	
	Firstly, while conventional \ob \SDQ \cite{Verreault:2024:J} leverages noise shaping in the Fourier domain \cite{Daubechies:2003:J} and has enabled successful implementations \cite{Verreault:2024:J}, there remain many signal classes that are non-bandlimited in the Fourier domain. Importantly, such signals can be bandlimited in other transform domains, such as the Fresnel transform (used in holography \cite{Gori:1981:J}) and the fractional Fourier transform (advantageous for MIMO systems \cite{Martone:2001:J, Rou:2024:J}). Generalizing the concept of bandwidth and consequently developing a noise-shaping approach beyond the Fourier domain would expand the applicability of \SDQ to these signal classes, thereby motivating the need for a generalized \SDQ framework. An initial attempt in this direction has been
    \cite{Bhandari:2020:FrFT}.
    Secondly, while the combination of \ob sampling \cite{Graf:2019:C,Florescu:2022:Ja,Eamaz:2024:J} and USF has been instrumental in addressing the dynamic range (DR) problem, current approaches operate exclusively in the time domain. As a result, they cannot handle signals specified by their Fourier features, such as band-pass and multi-band signals. To overcome this limitation, a transform-domain understanding of noise shaping is essential. However, this theoretical investigation is challenging due to the presence of modulo non-linearity. These aspects constitute the central motivation of this paper.
	
    \bpara{Contributions.} This paper introduces a {\bf generalized noise-shaping framework} that advances in two key directions: \uline{extending bandwidth} and \uline{overcoming dynamic range} limitations present in conventional approaches. Our main contributions are as follows:
	\begin{enumerate}[leftmargin = 35pt, label = $\textrm{C}_{\arabic*})$, itemsep = 10pt]
		\item We formalize the problem of \ob sampling in the Linear Canonical Transform (LCT) domain, which parametrically generalizes the concept of bandwidth to transforms, including Fresnel, Laplace, Gauss-Weierstrass, and Bargmann transforms (see Table~\ref{tab:trans-summary}). Our theory is fully compatible with the Fourier domain.
		\item We propose 1st and 2nd-order \SDQ architectures for the LCT domain, denoted as \LSDQ. We analyze the boundedness of quantization noise and demonstrate that the recovery error from \ob sampling remains bounded.  
		\item Addressing the saturation problem, we introduce \MLSDQ, or USF-based \LSDQ. We leverage the insights from noise shaping presented in \figref{fig:noiseshaping}(d) which shows that the bandlimited input, modulo folds and quantization noise are segregated in transform domain. This leads to a novel recovery method and introduces the first class of algorithms designed for transform-domain recovery.
		\item We show that the reduction of \MLSDQ to the Fourier domain case achieves superior performance compared to existing time-domain techniques \cite{Graf:2019:C}, offering reduced oversampling requirements and enhanced robustness.
	\end{enumerate}
    \bpara{Notations.} The set of integers, positive integers, real and complex numbers is denoted as $\Z$, $\Zp$, $\R$ and $\C$. The real part of a complex number $x$ is denoted as $\Re\rob{x}$ and the imaginary part of $x$ is denoted as $\Im\rob{x}$. Difference filter is defined as $v\n = \delta\n - \delta\sqb{n - 1}$, and applying this filter to a sequence $a\n$ is denoted as $\diff{a}\n = \l a \conv v \r\n$, $\diffop$ denotes difference operator $(\diffop a)\n = \diff{a}\n$, $\csgn\rob{a\n} = \sgn\rob{\Re\rob{a\n}} + \jmath \sgn\rob{\Im\rob{a\n}}$, $\infnorm{g \n}$ denotes absolute max-norm of $\gn$, $\lfloor \cdot \rfloor$ denotes the floor operator, and $\left\langle {f,g} \right\rangle  = \int {f{g^*}}$ is the $L_2$ inner-product. The mean\nobreakdash-square error (MSE) between $\nvec{x},\nvec{y}\in\C^{N}$ is denoted as $\MSE{\nvec{x}}{\nvec{y}} \DE \frac{1}{N}\sum\nolimits_{n=0}^{N-1} \left|x\sqb{n} -y\sqb{n} \right|^{2}$. $\Unif(a, b)$ is a uniform distribution from $a$ to $b$. Convolution is defined as $\conv$. $\ind_\D(\cdot)$ is the indicator function on domain $\D$, $\rob{\cdot}^H$ denotes the conjugate transpose.
	
	\section{Linear Canonical Transform}
	Linear Canonical Transform (\LCT) was introduced in \cite{Moshinsky:1971:J} in the context of harmonic oscillators in quantum mechanics. The \LCT generalizes a wide class of transformations, which are summarized in \tabref{tab:trans-summary}. Subsequent research on the \LCT includes optimal filtering \cite{Barshan:1997:J}, nonuniform sampling \cite{Sharma:2009:J}, multi-channel consistent sampling \cite{Xu:2017:J}, and sampling of signals bandlimited to a disc \cite{Zayed:2018:J}. The \LCT has also been used for modelling optical cavities \cite{Dahlen:2017:J}, implementing an optical crypto-system \cite{Mohammed:2024:J}, and the optical implementation of 2D \LCT was considered in \cite{Sahin:1998:J}. The \LCT, also referred to as the Affine Fourier Transform (AFT), has shown strong potential in handling doubly-dispersive channels in high-mobility communication environments \cite{Bemani:2023:J, Luo:2024:J, Rou:2024:J}. In what follows, we will revisit some of the definitions associated with the LCT.
	\begin{definition}[Linear Canonical Transform (\LCT)]
		Let $\Lambda = \bigl[ \begin{smallmatrix}
			a & b\\
			c & d\\
		\end{smallmatrix} \bigr]$ with $ad - bc = 1$. The LCT of a function $g\t, t \in \R$, is a unitary, integral mapping: $\L : g \to \LCTop{g}$ defined by
		\begin{equation}
			\L\sqb{g}\w = \LCTop{g}\w = \begin{cases}
				\left \langle g, \kL\rob{\cdot, \omega} \right \rangle & b \neq 0\\
				\sqrt{d} e^{\jmath\frac{1}{2}d\omega^2} & b = 0
			\end{cases}	 . \end{equation}
		The \LCT kernel $\kL \rob{\mat{r}}$, which depends on the time-frequency coordinate $\mat{r} = \sqb{t \ \ \omega}^\top$, is defined as
		\begin{equation}
			\kL \rob{\mat{r}} = \frac{\exp\rob{-\jmath \mat{r}^\top \mat{U} \mat{r}}}{\sqrt{-\jmath 2 \pi b}} , \ \ \mat{U} = \frac{1}{2b} \begin{bmatrix}
				a & -1\\
				-1 & d
			\end{bmatrix}.
		\end{equation}
		The inverse-\LCT is another \LCT defined via $\Linv : \LCTop{g} \to g$,
		\begin{equation}
			g\t = \Linv \sqb{\LCTop{g}}\t = \begin{cases}
				\left \langle \LCTop{g}, \kLinv \rob{\cdot, t} \right \rangle & b \neq 0\\
				\sqrt{a}e^{-\jmath \frac{1}{2} c a t ^ 2} g\rob{at} & b = 0
			\end{cases}.
		\end{equation}
	\end{definition}
	
	When working with sampled sequences, the discrete-time version of the \LCT is a relevant mathematical tool.
	\begin{definition}[Discrete-Time \LCT (\DTLCT)] Let $g\t$ be a continuous-time function with samples $g\n = g\rob{nT}, T > 0$. The \DTLCT of $g[n]$, denoted by $\LCTop{G}\w$, is defined as
		\begin{align*}
			\LCTop{G}\w &= T \left \langle g, \kL\rob{\cdot, \omega} \right \rangle\\
			&= \frac{T}{\sqrt{\jmath 2\pi b}} \sum_{n = -\infty}^{+\infty} g\n e^{\frac{\jmath}{2b} \l a \l n T \r^2 - 2 n T \omega + d \omega^2 \r}.
		\end{align*}
		The Inverse Discrete-Time \LCT (\IDTLCT) is then defined as
		\begin{align*}
			\gn &= \left \langle \LCTop{G}, \kLinv\rob{\cdot, nT} \right \rangle_{\sqb{-\pi b/T,\pi b/T}}\\
			&= \frac{1}{\sqrt{-\jmath 2\pi b}} \int_{-\pi b / T}^{\pi b / T} \LCTop{G}\w e^{-\frac{\jmath}{2b} \l a\l n T\r^2 - 2 n T \omega + d \omega^2 \r} d\omega.
		\end{align*}
	\end{definition}
	When working with compactly supported functions, Linear Canonical Series (\LCS) parallels the Fourier Series.
	\begin{definition}[Linear Canonical Series (\LCS)]
		Let $g\t$ be a continuous-time function on the interval $t \in [0, \tau)$. Its LCS is
		\begin{equation}
			g\t = \kappa_{b,\tau} \sum\nolimits_{m\in\Z} \LCTop{g}\sqb{m} \kL \rob{t, m \omega_0 b}, \omega_0 = \frac{2 \pi}{\tau}
		\end{equation}
		where $\kappa_{b,\tau} = \sqrt{\omega_0 b}$. The LCS coefficients are given by $\LCTop{g}\sqb{m} = \kappa_{b,\tau}\L\sqb{g}\rob{m \omega_0 b} \equiv \kappa_{b,\tau}\left \langle g, \kL\rob{\cdot, m \omega_0 b} \right \rangle$.
	\end{definition}
	Similar to the Discrete Fourier Transform (\DFT), we define the Discrete \LCT (\DLCT) as follows.
	\begin{definition}[Discrete \LCT (\DLCT)]
		Let $g\t$ be defined on the interval $[0, \tau)$ with samples $g\n = g\rob{nT}, T > 0, n \in [0, N - 1]$ and $N = \left \lfloor \frac{\tau}{T} \right \rfloor$, its \DLCT is defined as
		\begin{align}
			\LCTop{g}\m = T\kappa_{b,\tau}\sum\nolimits_{n=0}^{N-1} g\n \kLinv \rob{m \omega_0 b, nT}
			\label{eq:dlct-def}
		\end{align}
		where $\omega_0 = \frac{2\pi}{\tau}$. The Inverse \DLCT (\IDLCT) is defined as
		\begin{equation}
			g\n = \kappa_{b,\tau}\sum\nolimits_{\abs{m} \leq M} \LCTop{g}\m \kL \rob{nT, m \omega_0 b}.
		\end{equation}
	\end{definition}
	Since basis functions for \LCT are chirps and not complex exponentials, it is helpful to define chirp (de)modulation.
	\begin{definition}[Chirp (De)modulation]
		Let $\mat{\Lambda} = \bigl[ \begin{smallmatrix} a & b \\ c & d \end{smallmatrix} \bigr]$ be  $2 \times 2$ matrix. The chirp modulation function is defined as
		\begin{equation}
			m_{\mat{\Lambda}}\t \DE \exp \rob{\jmath \frac{a}{2b} t ^ 2}.
			\label{eq:chirp-def}
		\end{equation}
		${\mat{\Lambda}}$-parametrized up and down chirp modulation is denoted by
		\begin{equation}
			\uparr{g}\t \DE m_{\mat{\Lambda}}\t g \t \quad\text{and}\quad\dnarr{g}\t \DE m_{\mat{\Lambda}}^*\t g\t.
		\end{equation}
	\end{definition}
	Standard convolution $\conv$ does not yield multiplication in the \LCT domain. Hence, we define the \LCT convolution $\LCTconv$.
	\begin{definition}[\LCT Convolution and Product Theorem \cite{Bhandari:2019:J}]
		\label{def:lct-conv-prod-theorem}
		Let $\{g,f\}$ be continuous-time functions with samples $\{\gn,f\sqb{n}\} $, respectively. The \LCT convolution is defined as
		\begin{equation}
			h\n = \l f \LCTconv g \r\n \DE \KL \dnL \n \l \uparr{f} \conv \uparr{g} \r \n
			\label{eq:lct-conv-time}
		\end{equation}
		where $\KL = \frac{1}{\sqrt{\jmath 2 \pi b}}$. The DT-\LCT of $h\n$ is
		\begin{equation}
			h\n \xrightarrow{\mathsf{LCT}} {\LCTop{H}\w =  \frac{\Phi_\Lmtx\w}{T} \LCTop{F}\w \LCTop{G} \w}
		\end{equation}
		where $\{\LCTop{F},\LCTop{G},\LCTop{H} \}$ represent the DT-LCT of sequences, $\{f\n,g\n,h\n \}$, respectively and $\Phi_\Lmtx\w = e^{-\jmath\frac{d \omega^2}{2b}}$.
	\end{definition}
	For signals that are bandlimited in the \LCT domain, the following extension of Shannon's sampling theorem guarantees recovery from samples  \cite{Bhandari:2019:J}. 
	\begin{theorem} Let $g\in\BOmLCT$, then, provided that $T\leq {\pi b}/{\Omega_m}$, $g\t$ can be recovered from  samples via
		\begin{equation}
			g\t = e^{-\jmath\tfrac{a t^2}{2b}}\sum\limits_{n\in\Z} \uparr{g}\rob{nT} \sinc \rob{\frac{t - nT}{T}}.
		\end{equation}
	\end{theorem}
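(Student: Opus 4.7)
The strategy is to reduce the LCT-bandlimited sampling theorem to classical Shannon sampling by using the chirp modulation to strip off the quadratic phase in the LCT kernel. The key observation is that $\uparr{g}(t) = m_{\Lmtx}(t)\,g(t) = e^{\jmath a t^2/(2b)} g(t)$ turns LCT-bandlimitedness into ordinary Fourier bandlimitedness, with a rescaling by the parameter $b$.

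First I would expand the LCT definition for the case $b\neq 0$ and factor the kernel as
\begin{equation*}
\LCTop{g}(\omega) \;=\; \frac{1}{\sqrt{-\jmath 2\pi b}}\int g(t)\, e^{\jmath(at^2 - 2t\omega + d\omega^2)/(2b)}\,dt
\;=\; \frac{e^{\jmath d\omega^2/(2b)}}{\sqrt{-\jmath 2\pi b}} \int \uparr{g}(t)\, e^{-\jmath\omega t / b}\,dt,
\end{equation*}
which shows that $\LCTop{g}(\omega) = C(\omega)\,\widehat{\uparr{g}}(\omega/b)$ with $C(\omega)=e^{\jmath d\omega^2/(2b)}/\sqrt{-\jmath 2\pi b}$ nonvanishing and $\widehat{\cdot}$ the ordinary Fourier transform. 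Since $g\in\BOmLCT$ means $\LCTop{g}(\omega)=0$ for $|\omega|>\Omega_m$, this factorization immediately gives $\widehat{\uparr{g}}(\xi) = 0$ for $|\xi| > \Omega_m/b$, i.e.\ $\uparr{g}$ is Fourier-bandlimited with bandwidth $\Omega_m/b$.

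Next I would invoke the classical Shannon-Whittaker-Kotelnikov theorem on $\uparr{g}$. The Nyquist criterion $T \leq \pi/(\Omega_m/b) = \pi b/\Omega_m$ is exactly the hypothesis of the statement, and thus
\begin{equation*}
\uparr{g}(t) \;=\; \sum_{n\in\Z} \uparr{g}(nT)\,\sinc\!\left(\frac{t-nT}{T}\right).
\end{equation*}
Finally, multiplying both sides by $m_{\Lmtx}^*(t) = e^{-\jmath a t^2/(2b)}$ and using $g(t) = m_{\Lmtx}^*(t)\,\uparr{g}(t)$ yields the claimed reconstruction formula. The degenerate case $b=0$, where $\LCTop{g}(\omega) = \sqrt{d}\,e^{\jmath d\omega^2/2}\,g(d\omega)$ from the definition, is trivially a change of variable and can be noted separately if needed.

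The only non-routine step is ensuring the Nyquist condition transfers correctly under the $\omega \mapsto \omega/b$ rescaling; this is where the factor of $b$ in the bound $T\leq \pi b/\Omega_m$ enters, and it is the single point where one must be careful. Everything else is pointwise algebra and an application of the standard sampling theorem to the chirp-demodulated signal.
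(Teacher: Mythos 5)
Your proof is correct and is essentially the argument behind this result: the paper states the theorem without proof, citing \cite{Bhandari:2019:J}, and that reference (like the reconstruction formula itself, with its up-chirped samples $\uparr{g}(nT)$ and down-chirp factor $e^{-\jmath a t^2/(2b)}$) rests on exactly your reduction---chirp modulation turns $\Omega_m$-bandlimitedness in the LCT domain into ordinary Fourier bandlimitedness of width $\Omega_m/b$, after which Shannon's theorem and demodulation give the claim. The only cosmetic points are that the normalization constant in your factorization should be the conjugate $1/\sqrt{\jmath 2\pi b}$ under the paper's kernel convention (immaterial, since it is nonvanishing) and that the statement implicitly assumes $b>0$, so the $b=0$ aside is unnecessary.
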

	
	\begin{table}[t]
		\centering
		\caption{Summary of Transformations}
		\begin{tabular}{p{3.2cm}  p{4.8cm}}
			\hline
			\rowcolor{blue!20} 
			{ \LCT Parameters} $\left(\bsls \right) $ & { Corresponding Transformations} \\
			\hline
			\addlinespace
			
			$\bigl[ \begin{smallmatrix} & 0 & 1 \\- & 1 &0  \end{smallmatrix} \bigr] 
			= \pmb\Lambda_\textsf{FT}$ &  \textbf{Fourier Transform (\FT)}   \\ 		[4.5pt] 
			
			$\bigl[ \begin{smallmatrix} &\cos\theta&\sin\theta \\
				-&\sin\theta&\cos\theta \end{smallmatrix} \bigr] = \pmb\Lambda_\theta $	 &  \textbf{Fractional Fourier Transform (\FrFT)} \\	[4.5pt] 
			
			$\bigl[ \begin{smallmatrix} & 1 & b \\ & 0 & 1 \end{smallmatrix} \bigr] = \pmb\Lambda_\textsf{FrT}$ & \textbf{Fresnel Transform (\Fr)}  \\	[4.5pt] 
			
			$\bigl[ \begin{smallmatrix} & 0 & \j \\ & \j & 0 \end{smallmatrix} \bigr] 
			= \pmb\Lambda_\textsf{LT}$ & \textbf{Laplace Transform (LT)}  \\	[4.5pt] 
			
			$\bigl[ \begin{smallmatrix} & \j \cos\theta &\j \sin\theta  \\
				& \j \sin \theta & -\j\cos\theta \end{smallmatrix} \bigr] $ & \textbf{Fractional Laplace Transform}   \\ 	[4.5pt] 
			
			$\bigl[ \begin{smallmatrix} & 1 &\jmath b \\  & \jmath & 1 \end{smallmatrix} \bigr]$ & \textbf{Bilateral Laplace Transform}  \\	[4.5pt] 
			
			$\bigl[ \begin{smallmatrix} & 1 & -\jmath b  \\ & 0 & 1 \end{smallmatrix} \bigr]$, $b\;\geqslant\;0$  & \textbf{Gauss--Weierstrass Transform}   \\	[4.5pt] 
			
			$\tfrac{1}{{\sqrt 2 }} \bigl[\begin{smallmatrix} & 0 & e^{ - {{\jmath\pi } 
						\mathord{\left/{\vphantom {{j\pi } 2}} \right.\kern-\nulldelimiterspace} 2}} \\
				& -e^{ - {{\jmath\pi } \mathord{\left/{\vphantom {{j\pi } 2}} \right.\kern-\nulldelimiterspace} 2}} 
				& 1 \end{smallmatrix} \bigr]$ & \textbf{Bargmann Transform} \\	[4pt] 
			
			\addlinespace
			\hline
		\end{tabular}
		\label{tab:trans-summary}
	\end{table}
	
	\section {Revisiting 1-Bit Sampling in Fourier Domain}
	Here, we review the fundamental principles of \SDQ for Fourier-bandlimited signals. This will be the stepping stone towards the extension of \SDQ to a broader class of transformations (as outlined in Table~\ref{tab:trans-summary}) in the next section.
	We denote an $\Omega$-bandlimited function in the Fourier domain by,
	\begin{equation}
		\label{eq:FBL}
		g \in \BOmFT \leftrightarrow \LCTOpFT{g}\w = \LCTOpFT{g}\w \ind_{\sqb{-\OmFT, \OmFT}}\w.
	\end{equation}
	Such signals can be reconstructed from $\{\pm 1\}$ samples obtained via \SDQ, with a bounded reconstruction error \cite{Daubechies:2003:J}. \SDQ is a low-complexity acquisition scheme that employs a feedback loop with a signum function. For bounded signals, say ${|g|\leq 1}$, the \SDQ scheme is summarized as follows \cite{Daubechies:2003:J}:
	\begin{subequations}
		\label{eq:sdq-ft}
		\begin{empheq}[box=\shadowbox*]{align}
			\un &= u\sqb{n - 1} + g \rob{n\Tcur} - \qn,  \quad u \in \l -1, 1\r 
			\label{eq:un-ft} \\
			\qn &= \sgn \rob{u\sqb{n - 1} + g \rob{n\Tcur}}, \qquad T = \pi/\Omega
		\end{empheq}	
	\end{subequations}
	where $u$ is an intermediate variable and $\osrat \in \l 0, 1\r$ denotes the oversampling ratio. We rearrange \eqref{eq:un-ft} to establish the relation between \ob samples $q$ and the bounded input signal $g$
	\begin{equation}
		\qn = \underbrace{g \rob{n\Tcur}}_{g \in \BOmFT} - \underbrace{\l u \conv \diffflt \r \n}_\text{High-Pass} \equiv 
		{g \rob{n\Tcur}} - \rob{\Delta u}\sqb{n}.
		\label{eq:qn-noise-shaping}
	\end{equation}
	Since $\diffflt\n$ is a high-pass filter, it impels the quantization noise into high frequencies, moving it away from $\LCTOpFT{g}\w$, aptly justifying why \SDQ is associated with \nosh \cite{Daubechies:2003:J}. The net effect is that one can recover $\gt$ by filtering $\qn$,
	\begin{equation}
		\tilde{g}\t = \osrat \sum\limits_{n \in \Z} \qn \idker_\Omega\rob{\frac{t}{T} - n\osrat}
	\end{equation}
	where $\idker_\Omega$ is a low-pass interpolation function \cite{Daubechies:2003:J}. Not surprisingly, in line with conventional methods, redundancy improves the reconstruction quality as noise shaping is more effective---$\un$ is displaced further away from the low-pass interval of $\LCTOpFT{g}\w$. The reconstruction error can be expressed as
	\begin{align*}
		e\t \DE \rob{g-\tilde{g}}\rob{t} =  \osrat \underbrace{\sum\nolimits_{n \in \Z} \l u \conv v \r \n \idker \rob{\tfrac{t}{T} - n\osrat}}_\text{Low-pass filtered quantization noise}.
	\end{align*}
	The following result quantifies the relationship between oversampling ratio $\rob{\osrat}$ and reconstruction quality:
	\begin{equation}
		\abs{e\t} \leq \osrat \norm{\partial_t \idker_\Omega}_{L^1}
		\label{eq:ft-sdq-err-bound}
	\end{equation}
	which can be further improved to $\O\rob{\osrat^3}$ (see \cite{Daubechies:2003:J} Section 2.3).
	
	\section{Towards 1-Bit Sampling in \LCT Domain}
	\label{sec:1BLCT}
	Here, we develop the \LSDQ scheme---the \SDQ scheme for the LCT domain. Clearly, the \LSDQ scheme should maintain backwards compatibility with the conventional Fourier domain \SDQ. The first step towards this goal is to generalize the notation of bandlimitedness in \eqref{eq:FBL}. We do so by defining the class of $\Omega$-bandlimited signals in the \LCT domain as,
	\begin{equation}
		g \in \BOmLCT \leftrightarrow \LCTop{g}\w = \LCTop{g}\w \ind_{[-\OmL, \OmL]} \w.
	\end{equation}
	This generalizes the notion of bandlimitedness to a much wider class of transformations (see \tabref{tab:trans-summary}). Such signals, better described by their \LCT, have been studied in optics \cite{Souvorov:2006:J, Wolf:2007:J}, harmonic analysis \cite{Zayed:2021:J}, signal processing \cite{Liebling:2003:J, Bhandari:2012:J}, or communications \cite{Luo:2024:J, Rou:2024:J}. Despite these research efforts, \SDQ for \LCT-bandlimited signals was not studied previously.
	
	Clearly, whenever $\Lmtx \neq \LmtxFT$, the difficulty is that \nosh can no longer be achieved via \eqref{eq:qn-noise-shaping} because of the breakdown of the well-known convolution-multiplication property\footnote{That is, when working with LCTs, convolution in one domain does not imply multiplication in another domain. } of the Fourier transforms for the LCT domain. This necessitates the development of a new strategy. Here, we will develop 1st and 2nd-order \LSDQ schemes. Higher-order generalizations may be considered in future works.
	
	\bpara{1st-order \LSDQ.} Since our goal is to achieve noise shaping for any $g \in \BOmLCT$, it is natural to consider the LCT of the form,
	\begin{equation}
		\LCTop{Q}\w = \underbrace{\LCTop{G}\w}_\text{$g \in \BOmLCT$} - \underbrace{\LCTop{U}\w \l 1 - e^{-\jmath \frac{\omega \Tcur}{b}}\r}_\text{High-Pass}
		\label{eq:q-lct-def}
	\end{equation}
	where $\LCTop{Q}\w$ denotes the \DTLCT (see \defref{def:lct-conv-prod-theorem}) of \ob samples, $\qLn$. In the time domain, one would expect
	\begin{equation}
		\qLn = {\gn}
		- {\l u \LCTconv \vL \r \n}
		\label{eq:qLn-hp}
	\end{equation}
	where $\vL$ (noise shaping filter) and $u$ remain to be characterized. Note that we have utilized the LCT convolution operator, \ie, $\LCTconv$, which respects the convolution and product theorem \cite{Bhandari:2012:J} in the LCT domain, and allows us to represent \eqref{eq:qLn-hp} as,
	\begin{equation}
		\LCTop{Q}\w = \LCTop{G}\w - \frac{e^{-\jmath \frac{d \omega ^ 2}{2b}}}{\Tcur} \LCTop{U}\w \LCTop{V} \w.
		\label{eq:qw-lct-conv}
	\end{equation}
	We can now relate \eqref{eq:qw-lct-conv} with \eqref{eq:q-lct-def}, which leads to the identification of the noise shaping filter in terms of LCT parameters,
	\begin{align}
		\label{eq:Vw}
		\LCTop{V} \w &= \rob{\Tcur} e^{\jmath\frac{d\omega^2}{2b}} \l 1 - e^{-\jmath \frac{\omega \Tcur}{b}}\r, \quad T = {\pi b}/{\Omega_m}      
	\end{align}
	where the magnitude response implements a high-pass filter in the LCT domain, or $|{\LCTop{V}\w}| = 2 \Tcur \abs{\sin \l {\omega \Tcur / \rob{2 b}} \r}$.
	
	To express the state equations of \LSDQ, we need to deduce $\un$ from \eqref{eq:qLn-hp}. To this end, we first identify $\vLn$ as follows,
	\begin{align*}
		\vLn & = \left \langle \LCTop{V}, \kLinv \rob{\cdot, n \Tcur} \right \rangle \\
		&= \frac{e^{-\jmath\frac{a\rob{n\Tcur}^2}{2b}}}{\sqrt{-\jmath 2 \pi b}}  
		\rob{\Tcur}
		\int_{-\pi b / \rob{\Tcur}}^{\pi b / \rob{\Tcur} }  {\left( {1 - {e^{ - \jmath \frac{{\omega hT}}{b}}}} \right)} {e^{\jmath \frac{{\omega hT}}{b}n}}d\omega\\ 
		&= \sqrt{\jmath 2 \pi b} \dnarr{\diffflt} \n.
	\end{align*}
	Setting $\vLn = \sqrt{\jmath 2 \pi b} \dnarr{\diffflt}$, next we re-write \eqref{eq:qLn-hp} as follows,
	\begin{align}
		\qLn &= \gn - \KL \dnL\n \l \uparr{u} \conv \uparr{v}_\Lmtx \r \n \notag\\
		&= \rob{g - u}\n + \dnL\n \upL\sqb{n - 1} u\sqb{n - 1}.
		\label{eq:qL-exp}
	\end{align}
	Now, since $\dnL\n \upL\sqb{n - 1} = e^{-\jmath\frac{a\rob{2n - 1} \l \Tcur \r^2}{2b} }$ in the above, we can express the intermediate state variable $\un$ as
	\begin{equation}
		u\n = \gn -  \qLn + e^{-\jmath\frac{a\rob{2n - 1} \l \Tcur \r^2}{2b} } u\sqb{n - 1}.
		\label{eq:un-sdq-lct}
	\end{equation}
	In analogy to \eqref{eq:sdq-ft}, the 1st-order \LSDQ is written as follows,	
	\begin{subequations}
		\begin{empheq}[box=\shadowbox*]{align}
			\un &= \gn -  \qLn + e^{-\jmath\frac{a\rob{2n - 1}\l \Tcur \r^2}{2b}} u\sqb{n - 1}\label{eq:un-lct-proof}\\
			\qLn &= \csgn \rob{e^{-\jmath\frac{a\rob{2n - 1}\l \Tcur \r^2}{2b}} u \sqb{n - 1} + \gn}\label{eq:sdq-lct-proof}.
		\end{empheq}
		\label{eq:lsdq-1}
	\end{subequations}	
    The novel \LSDQ architecture is depicted in \figref{fig:lctsdq}. We reconstruct $g\t$ by low-pass filtering \ob samples $\qLn$ with interpolation kernel $\idker$ with bandwidth $\Omega$ as follows,
	\begin{align}
		\grec\t & = \osrat e^{-\jmath \frac{a t^2}{2b}} \sum\limits_{n \in \Z} \uparr{q}_{\Lmtx}\n \idker_\Omega \rob{\frac{t}{T} - n \osrat}\notag\\
		&\EQc{eq:qLn-hp} \gt - \underbrace{\osrat e^{-\jmath \frac{a t^2}{2b}} \sum\limits_{n \in \Z} \l \uparr{u} \conv \diffflt \r \n \idker \rob{\frac{t}{T} - n \osrat}}_{e\t}\label{eq:lsdq1-rec}
	\end{align}
	where the approximation error $e\t$ is attributed to 
	$u \LCTconv \vL$---the contamination of the baseband by the quantization noise. Since the approximation error is defined as $e\t = g\t - \tilde{g}\t$, we are interested in bounding its value. Before bounding the approximation error, we first bound the state variable $\un$.
	\begin{figure}[!t]
		\centering
		\scalebox{0.5}{
			\tikzset{every picture/.style={line width=0.75pt}} 
			\begin{tikzpicture}[x=0.75pt,y=0.75pt,yscale=-1,xscale=1]
				\draw  [line width=1.5]  (111,129.75) .. controls (111,125.33) and (114.58,121.75) .. (119,121.75) .. controls (123.42,121.75) and (127,125.33) .. (127,129.75) .. controls (127,134.17) and (123.42,137.75) .. (119,137.75) .. controls (114.58,137.75) and (111,134.17) .. (111,129.75) -- cycle ; \draw  [line width=1.5]  (111,129.75) -- (127,129.75) ; \draw  [line width=1.5]  (119,121.75) -- (119,137.75) ;
				\draw  [line width=1.5]  (157.5,129.75) .. controls (157.5,125.33) and (161.08,121.75) .. (165.5,121.75) .. controls (169.92,121.75) and (173.5,125.33) .. (173.5,129.75) .. controls (173.5,134.17) and (169.92,137.75) .. (165.5,137.75) .. controls (161.08,137.75) and (157.5,134.17) .. (157.5,129.75) -- cycle ; \draw  [line width=1.5]  (157.5,129.75) -- (173.5,129.75) ; \draw  [line width=1.5]  (165.5,121.75) -- (165.5,137.75) ;
				\draw  [line width=1.5]  (214.5,109) -- (259.5,109) -- (259.5,150.5) -- (214.5,150.5) -- cycle ;
				\draw [line width=1.5]    (65,129.75) -- (107,129.75) ;
				\draw [shift={(111,129.75)}, rotate = 180] [fill={rgb, 255:red, 0; green, 0; blue, 0 }  ][line width=0.08]  [draw opacity=0] (13.4,-6.43) -- (0,0) -- (13.4,6.44) -- (8.9,0) -- cycle    ;
				\draw [line width=1.5]    (174.5,129.75) -- (211.5,129.75) ;
				\draw [shift={(215.5,129.75)}, rotate = 180] [fill={rgb, 255:red, 0; green, 0; blue, 0 }  ][line width=0.08]  [draw opacity=0] (13.4,-6.43) -- (0,0) -- (13.4,6.44) -- (8.9,0) -- cycle    ;
				\draw  [line width=1.5]  (345.5,129.75) .. controls (345.5,125.07) and (349.29,121.28) .. (353.97,121.28) .. controls (358.65,121.28) and (362.44,125.07) .. (362.44,129.75) .. controls (362.44,134.43) and (358.65,138.22) .. (353.97,138.22) .. controls (349.29,138.22) and (345.5,134.43) .. (345.5,129.75) -- cycle ; \draw  [line width=1.5]  (347.98,123.76) -- (359.96,135.74) ; \draw  [line width=1.5]  (359.96,123.76) -- (347.98,135.74) ;
				\draw  [line width=1.5]  (437.33,129.75) .. controls (437.33,125.33) and (440.92,121.75) .. (445.33,121.75) .. controls (449.75,121.75) and (453.33,125.33) .. (453.33,129.75) .. controls (453.33,134.17) and (449.75,137.75) .. (445.33,137.75) .. controls (440.92,137.75) and (437.33,134.17) .. (437.33,129.75) -- cycle ; \draw  [line width=1.5]  (437.33,129.75) -- (453.33,129.75) ; \draw  [line width=1.5]  (445.33,121.75) -- (445.33,137.75) ;
				\draw [line width=1.5]    (363,129.75) -- (433.33,129.75) ;
				\draw [shift={(437.33,129.75)}, rotate = 180] [fill={rgb, 255:red, 0; green, 0; blue, 0 }  ][line width=0.08]  [draw opacity=0] (13.4,-6.43) -- (0,0) -- (13.4,6.44) -- (8.9,0) -- cycle    ;
				\draw [line width=1.5]    (453.33,129.75) -- (490.33,129.75) ;
				\draw [shift={(494.33,129.75)}, rotate = 180] [fill={rgb, 255:red, 0; green, 0; blue, 0 }  ][line width=0.08]  [draw opacity=0] (13.4,-6.43) -- (0,0) -- (13.4,6.44) -- (8.9,0) -- cycle    ;
				\draw [line width=1.5]    (259.5,129.75) -- (341.5,129.75) ;
				\draw [shift={(345.5,129.75)}, rotate = 180] [fill={rgb, 255:red, 0; green, 0; blue, 0 }  ][line width=0.08]  [draw opacity=0] (13.4,-6.43) -- (0,0) -- (13.4,6.44) -- (8.9,0) -- cycle    ;
				\draw [line width=1.5]    (353.97,138.22) -- (353.97,173) -- (165.5,173) -- (165.5,141.75) ;
				\draw [shift={(165.5,137.75)}, rotate = 90] [fill={rgb, 255:red, 0; green, 0; blue, 0 }  ][line width=0.08]  [draw opacity=0] (13.4,-6.43) -- (0,0) -- (13.4,6.44) -- (8.9,0) -- cycle    ;
				\draw [line width=1.5]    (353.97,77.67) -- (353.97,117.28) ;
				\draw [shift={(353.97,121.28)}, rotate = 270] [fill={rgb, 255:red, 0; green, 0; blue, 0 }  ][line width=0.08]  [draw opacity=0] (13.4,-6.43) -- (0,0) -- (13.4,6.44) -- (8.9,0) -- cycle    ;
				\draw [line width=1.5]    (127,129.75) -- (153.5,129.75) ;
				\draw [shift={(157.5,129.75)}, rotate = 180] [fill={rgb, 255:red, 0; green, 0; blue, 0 }  ][line width=0.08]  [draw opacity=0] (13.4,-6.43) -- (0,0) -- (13.4,6.44) -- (8.9,0) -- cycle    ;
				\draw  [line width=1.5]  (494.5,108.5) -- (575,108.5) -- (575,150.36) -- (494.5,150.36) -- cycle ;
				\draw [line width=1.5]    (598.33,129.75) -- (598.33,207.4) -- (119,208.5) -- (119,141.75) ;
				\draw [shift={(119,137.75)}, rotate = 90] [fill={rgb, 255:red, 0; green, 0; blue, 0 }  ][line width=0.08]  [draw opacity=0] (13.4,-6.43) -- (0,0) -- (13.4,6.44) -- (8.9,0) -- cycle    ;
				\draw [shift={(598.33,129.75)}, rotate = 90] [color={rgb, 255:red, 0; green, 0; blue, 0 }  ][fill={rgb, 255:red, 0; green, 0; blue, 0 }  ][line width=1.5]      (0, 0) circle [x radius= 4.36, y radius= 4.36]   ;
				\draw [line width=1.5]    (575.33,129.75) -- (627.4,129.75) ;
				\draw [shift={(631.4,129.75)}, rotate = 180] [fill={rgb, 255:red, 0; green, 0; blue, 0 }  ][line width=0.08]  [draw opacity=0] (13.4,-6.43) -- (0,0) -- (13.4,6.44) -- (8.9,0) -- cycle    ;
				\draw [line width=1.5]    (84,129.75) -- (84,18.33) -- (445,18.33) -- (445.32,117.75) ;
				\draw [shift={(445.33,121.75)}, rotate = 269.82] [fill={rgb, 255:red, 0; green, 0; blue, 0 }  ][line width=0.08]  [draw opacity=0] (13.4,-6.43) -- (0,0) -- (13.4,6.44) -- (8.9,0) -- cycle    ;
				\draw [shift={(84,129.75)}, rotate = 270] [color={rgb, 255:red, 0; green, 0; blue, 0 }  ][fill={rgb, 255:red, 0; green, 0; blue, 0 }  ][line width=1.5]      (0, 0) circle [x radius= 4.36, y radius= 4.36]   ;
				\draw [color={rgb, 255:red, 0; green, 0; blue, 0 }  ,draw opacity=1 ][line width=2.25]    (99,158.2) -- (110,158.2) ;
				
				\draw (39.92,129.56) node  [font=\LARGE]  {$g\n$};
				\draw (237,129.75) node  [font=\LARGE]  {$z^{-1}$};
				\draw (350.92,58.4) node  [font=\Large]  {$\exp \l -\jmath \frac{a\rob{2n-1}\rob{\Tcur}^{2}}{2b} \r$};
				\draw (190.42,103.9) node  [font=\LARGE]  {$u\n$};
				\draw (305.42,103.9) node  [font=\LARGE]  {$u\sqb{n - 1}$};
				\draw (662.12,130.06) node  [font=\LARGE]  {$\qLn$};
				\draw (534.75,129.43) node  [font=\LARGE]  {$\csgn\rob{ \cdot }$};

			\end{tikzpicture}
		}
		\caption{System architecture for 1st-order \LSDQ described by \eqref{eq:lsdq-1}.}
		\label{fig:lctsdq}
	\end{figure}
	
	\begin{lemma}[Boundedness Property]
		\label{lem:BP}
		Assume $\abs{g\t} \leqslant 1$, $\abs{\Re\rob{u\sqb{0}}} < 1$ and $\abs{\Im\rob{u\sqb{0}}} < 1$, then for $\{\un\}_{n \in \Zp}$ defined in \eqref{eq:un-lct-proof}, it holds that $\abs{\Re\rob{\un}} < 1$ and $\abs{\Im\rob{\un}} < 1$.
	\end{lemma}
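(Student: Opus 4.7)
The proof is by induction on $n \in \Zp$. The base case $n=0$ is immediate from the hypothesis on $u\sqb{0}$. For the inductive step, the first move is algebraic. Writing $\phi_n \DE \tfrac{a\rob{2n-1}\rob{hT}^2}{2b}$ and $w_n \DE e^{-\jmath\phi_n}\,u\sqb{n-1} + g\sqb{n}$, the quantization rule \eqref{eq:sdq-lct-proof} reads $\qLn = \csgn\rob{w_n}$, so substituting into \eqref{eq:un-lct-proof} collapses the state update to the compact form
\begin{equation*}
u\sqb{n} \;=\; w_n - \csgn\rob{w_n}.
\end{equation*}
Since $\csgn$ decouples into signs on each component, this yields
\begin{equation*}
\Re\rob{u\sqb{n}} = \Re\rob{w_n} - \sgn\rob{\Re\rob{w_n}}, \qquad \Im\rob{u\sqb{n}} = \Im\rob{w_n} - \sgn\rob{\Im\rob{w_n}}.
\end{equation*}

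The next step is the elementary scalar observation that $\abs{x - \sgn\rob{x}} < 1$ whenever $\abs{x}<2$ (with $\sgn\rob{0}=0$ handling the borderline case). Thus the inductive goal reduces to establishing $\abs{\Re\rob{w_n}} < 2$ and $\abs{\Im\rob{w_n}} < 2$. To obtain these I would expand the chirp rotation as $e^{-\jmath\phi_n} = \cos\phi_n - \jmath\sin\phi_n$ (valid since $\phi_n\in\R$ for real \LCT parameters $a,b$, whence $\abs{e^{-\jmath\phi_n}}=1$), giving
\begin{equation*}
\Re\rob{w_n} = \cos\phi_n\,\Re\rob{u\sqb{n-1}} + \sin\phi_n\,\Im\rob{u\sqb{n-1}} + \Re\rob{g\sqb{n}},
\end{equation*}
and an analogous expression for $\Im\rob{w_n}$. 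Combining the inductive hypothesis with $\abs{g\sqb{n}}\leq 1$ (hence $\abs{\Re\rob{g\sqb{n}}},\abs{\Im\rob{g\sqb{n}}}\leq 1$) should furnish the required componentwise bounds on $w_n$, closing the induction.

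The principal obstacle is sharpness in this last bound. A blunt triangle-inequality estimate yields only $\abs{\Re\rob{w_n}} < \abs{\cos\phi_n}\abs{\Re\rob{u\sqb{n-1}}} + \abs{\sin\phi_n}\abs{\Im\rob{u\sqb{n-1}}} + 1 < \sqrt{2}+1$, which exceeds the critical threshold $2$. Closing this gap calls for either a Cauchy--Schwarz argument that couples the two components of $u\sqb{n-1}$ via the identity $\cos^2\phi_n+\sin^2\phi_n = 1$ against a \emph{joint} quadratic invariant carried alongside the componentwise one, or a refined argument that exploits the specific chirp phase so that $\Re\rob{u\sqb{n-1}}$ and $\Im\rob{u\sqb{n-1}}$ cannot simultaneously contribute extremally to $\Re\rob{w_n}$. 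The technical heart of the lemma lies in this quantitative sharpening; the surrounding structural steps (induction, algebraic collapse to $w_n - \csgn\rob{w_n}$, and componentwise decoupling) are routine.
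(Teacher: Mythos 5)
Your structural skeleton is exactly the paper's: induction on $n$, the collapse of the update to $u\sqb{n} = w_n - \csgn\rob{w_n}$ with $w_n = e^{-\jmath\phi_n}u\sqb{n-1} + g\sqb{n}$, componentwise decoupling of $\csgn$, and the scalar fact that $\abs{x - \sgn\rob{x}} < 1$ whenever $\abs{x} < 2$. Where you stop — declining to assert $\abs{\Re\rob{w_n}} < 2$ — is a genuine gap in your proposal: as you admit, the inductive step is not closed, so this is not a complete proof. And your diagnosis of why it does not close is correct: the open box $\{\abs{\Re\rob{u}} < 1,\ \abs{\Im\rob{u}} < 1\}$ is not invariant under multiplication by the unit-modulus chirp $e^{-\jmath\phi_n}$ (take $u\sqb{n-1} = 0.9\rob{1+\jmath}$ and $\phi_n = \pi/4$, which gives $\Re\rob{e^{-\jmath\phi_n}u\sqb{n-1}} = 0.9\sqrt{2} > 1$), so the triangle inequality only delivers $\abs{\Re\rob{w_n}} < \sqrt{2} + 1$, which overshoots the threshold $2$.

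You should know, however, that the sharpening you were hunting for is not supplied by the paper either. The paper's proof sets $p_\Lmtx\n = e^{-\jmath\phi_n}$, observes $\abs{p_\Lmtx} = 1$, and then asserts outright that $\abs{\Re\rob{u\sqb{n-1}}} < 1$ and $\abs{\Im\rob{u\sqb{n-1}}} < 1$ imply $\abs{\Re\rob{p_\Lmtx\n u\sqb{n-1}}} < 1$ and $\abs{\Im\rob{p_\Lmtx\n u\sqb{n-1}}} < 1$ — precisely the implication you refused to make, and which fails for a generic rotation of a point near the corner of the box. That implication would be legitimate if the induction carried the modulus bound $\abs{u\sqb{n-1}} \leq 1$ (the disk is rotation-invariant), but the update $w_n - \csgn\rob{w_n}$ only returns the componentwise bound, allowing $\abs{u\sqb{n}}$ up to $\sqrt{2}$, so neither the box nor the disk closes as an invariant under the argument as written. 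Your suggested remedies — a joint (rotation-compatible) quadratic invariant, or a restriction on the chirp phase — are exactly what a rigorous version needs; note in particular that in the Fourier case $a = 0$ one has $p_\Lmtx \equiv 1$, the scheme decouples into two real $\SD$ loops, and the box argument is exactly the classical one, which is the regime where the paper's proof is airtight. So: your proposal is incomplete, but the missing step is the very step the paper's own proof asserts without justification rather than resolves.
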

	\begin{proof}
		Let $p_\Lmtx\n = e^{-\jmath \frac{a\rob{2n - 1}\rob{\Tcur}^2}{2b}}$, clearly $\abs{p_\Lmtx} = 1$. By induction, $\abs{\Re\rob{u\sqb{n - 1}}} < 1$ and $\abs{\Im\rob{u\sqb{n - 1}}} < 1$, then, $\abs{\Re\rob{p_\Lmtx\n u\sqb{n - 1}}} < 1$ and $\abs{\Im\rob{p_\Lmtx\n u\sqb{n - 1}}} < 1$. Since $\infnorm{g} \leqslant 1$, then $\abs{\Re\rob{p_\Lmtx\n u \sqb{n - 1} + \gn}} < 2$ and $\abs{\Im\rob{p_\Lmtx\n u \sqb{n - 1} + \gn}} < 2$. Assuming $\qLn$ defined in \eqref{eq:sdq-lct-proof}, $\abs{\Re{\rob{\qLn}}} = 1$ and $\abs{\Im{\rob{\qLn}}} = 1$ which implies that $\abs{\Re\rob{\un}} = \abs{\Re\rob{p_\Lmtx\n u \sqb{n -1} + g\n - \qLn}} < 1$ and $\abs{\Im\rob{\un}} = \abs{\Im\rob{p_\Lmtx\n u \sqb{n -1} + g\n - \qLn}} < 1$.
	\end{proof}
	With $\un$ bounded, next we show that the approximation error, $e\rob{t} =g\rob{t} - \widetilde{g}\rob{t}$ (see \eqref{eq:lsdq1-rec}), is also bounded.
	\begin{proposition}[Recovery Error Bound]
		\label{pr:err-bound}
		Let $g \in \BOmLCT, \infnorm{g} \leqslant 1$ with \ob samples, $\qLn$ \eqref{eq:sdq-lct-proof}. Then the following holds,
		\begin{align}
			\label{eq:etbound}
			e\rob{t} & = \osrat e^{-\jmath \frac{a t^2}{2b}} \sum\limits_{n \in \Z} \l \uparr{u} \conv \diffflt \r \n \idker \rob{\frac{t}{T} - n \osrat}, \ \ \  \osrat < \l 0, 1 \r \notag \\ 
			& \Longrightarrow \abs{e\t} \leqslant \osrat \sqrt{2} \norm{\partial_t \idker_\Omega}_{L^1}.
		\end{align}
	\end{proposition}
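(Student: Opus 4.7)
The plan is to mirror the classical one-dimensional Daubechies--DeVore bound on $\Sigma\Delta$ reconstruction error, while carefully tracking the chirp factors and the complex (rather than real) nature of the state variable that appear in the \LSDQ setting. First, I would peel off the outer chirp $e^{-\jmath a t^2/(2b)}$, which has unit modulus and therefore does not affect $|e(t)|$. Inside the sum, using $\diffflt\n = \delta\n - \delta\sqb{n-1}$ I would rewrite $\l \uparr{u} \conv \diffflt \r \n = \uparr{u}\n - \uparr{u}\sqb{n-1}$.

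Next, I would apply Abel summation (summation by parts) to transfer the difference from the state variable to the interpolation kernel. Formally,
\begin{equation*}
\sum_{n\in\Z} \rob{\uparr{u}\n - \uparr{u}\sqb{n-1}}\, \idker_\Omega\rob{\tfrac{t}{T}-nh}
=\sum_{n\in\Z} \uparr{u}\n\, \rob{\idker_\Omega\rob{\tfrac{t}{T}-nh} - \idker_\Omega\rob{\tfrac{t}{T}-(n+1)h}}.
\end{equation*}
The chirp modulation in $\uparr{u}\n = m_\Lmtx\n u\n$ has $|m_\Lmtx\n|=1$, so it disappears once we take absolute values. The key pointwise bound on $|u\n|$ comes from \lemref{lem:BP}: since $|\Re(u\n)|<1$ and $|\Im(u\n)|<1$, we get $|u\n| < \sqrt{2}$, which is precisely the origin of the $\sqrt{2}$ factor in \eqref{eq:etbound} (compared with the real-valued Fourier case).

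For the kernel differences I would invoke the fundamental theorem of calculus,
\begin{equation*}
\abs{\idker_\Omega\rob{\tfrac{t}{T}-nh} - \idker_\Omega\rob{\tfrac{t}{T}-(n+1)h}}
\leqslant \int_{nh}^{(n+1)h} \abs{\idker_\Omega'\rob{\tfrac{t}{T}-x}}\, dx,
\end{equation*}
and then sum over $n\in\Z$. Since the intervals $[nh,(n+1)h]$ tile $\R$, the telescoping sum collapses to $\norm{\partial_t \idker_\Omega}_{L^1}$ after a change of variables. Combining the three ingredients---the $h$ out front of \eqref{eq:lsdq1-rec}, the uniform bound $\abs{\uparr{u}\n}<\sqrt{2}$, and the $L^1$ bound on the telescoping---yields $\abs{e\t} \leqslant h\sqrt{2}\,\norm{\partial_t \idker_\Omega}_{L^1}$.

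I expect the main obstacle to be bookkeeping rather than substance: correctly handling the chirp $m_\Lmtx\n$ inside the Abel rearrangement so that it cancels cleanly, and confirming that the $\csgn$-induced $\sqrt{2}$ is the only place where the complex-valued quantization alphabet costs anything relative to the real-valued Fourier analysis. Once that is verified, the bound follows from standard noise-shaping arguments applied in the chirp-modulated coordinate where $\vLn = \sqrt{\jmath 2\pi b}\,\dnarr{\diffflt}\n$ genuinely acts as a first-order difference.
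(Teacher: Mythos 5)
Your proof is correct and takes essentially the same route as the paper's: rearrange the sum so that the first-order difference lands on the interpolation kernel (the paper does this by writing the convolution as a double sum, you by Abel summation), bound $\abs{\uparr{u}\n}$ by $\sqrt{2}$ via the boundedness lemma, and control the telescoped kernel differences by $\norm{\partial_t \idker_\Omega}_{L^1}$ using the fundamental theorem of calculus. The differences are purely presentational.
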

	
	\begin{proof}
		With $\abs{e\t} = \osrat \left |  \sum\nolimits_{n\in\Z} \l \uparr{u} \conv v \r \n \idker_{\Omega} \l \frac{t}{T} - n\osrat \r \right |$ and since $|{\uparr{u}\n}| = \sqrt{2}$, we can simplify $\abs{e\t}$ as follows,
		\begin{align*}
			&\abs{e\t} = \osrat \left |  \sum\nolimits_{m\in\Z} \uparr{u}\m  \sum\nolimits_{n \in \Z} v\sqb{n - m} \idker_{\Omega} \l \tfrac{t}{T} - n\osrat \r \right | \\
			&\leq \osrat \sqrt{2} \sum\nolimits_{n \in \Z} \left | \l \idker_{\Omega} \l \tfrac{t}{T} - n\osrat \r -  \idker_{\Omega} \l \tfrac{t}{T} - \rob{n + 1}\osrat \r \r \right |\\
			& \leq \osrat \sqrt{2} \sum\nolimits_{n \in \Z} \int_{\frac{t}{T} - \rob{n + 1}\osrat}^{\frac{t}{T} - n\osrat}\left | \partial_t \idker \rob{y}  dy \right | = \osrat \sqrt{2} \norm{\partial_t \idker_\Omega}_{L^1}.		
		\end{align*}
		This concludes the proof. The additional $\sqrt{2}$ factor is due to the use of complex numbers. For the FT case and real input, replacing $\csgn$ in \eqref{eq:sdq-lct-proof} with $\sgn$ leads to the conventional $\SD$ architecture in \eqref{eq:sdq-ft} with error bound shown in \eqref{eq:ft-sdq-err-bound}.
	\end{proof}
	
	\bpara{2nd-order \LSDQ.} Using a 2nd-order difference filter offers improved noise shaping in the Fourier domain \cite{Daubechies:2003:J}. Here, we leverage this property for \LCT and derive the corresponding 2nd-order scheme. In the view of \eqref{eq:q-lct-def}, replacing $\l 1 - e^{-\jmath \rob{\omega \Tcur/b}} \r$ with $\l 1 - e^{-\jmath \rob{\omega \Tcur/b}}\r^2$ improves noise rejection. \DTLCT of 2nd-order \LSDQ, akin to \eqref{eq:q-lct-def}, satisfies
	\begin{align}
		&\LCTop{Q}^{[2]}\w 
		= {\LCTop{G}\w} - {\LCTop{U}\w \l 1 - e^{-\jmath \frac{\omega \Tcur}{b}} \r ^ 2} \label{eq:sd2-lct-req} \\
		&\EQc{eq:Vw} \LCTop{G}\w - 
		\rob{\tfrac{e^{-\jmath \frac{d\omega^2}{2b}}}{\Tcur} }
		\LCTop{U}\w \l 1 - e^{-\jmath \frac{\omega \Tcur}{b}} \r \LCTop{V}\w\notag
	\end{align}
	where $\LCTop{Q}^{[2]}\w$ denotes the \DTLCT of $\qLsecn$. Since $\qLsecn$ utilizes a 2nd-order filter (see \eqref{eq:sd2-lct-req}), we expect an auto-convolution structure in the time domain,
	\begin{equation}
		\qLsecn = \gn - \rob{u \LCTconv \vL \LCTconv \vL}\n.
		\label{eq:qL2-n}
	\end{equation}

	By introducing $x = u \LCTconv \vL$, we can interpret \eqref{eq:qL2-n} as a 1st-order \LSDQ, enabling a simplified representation,
	\begin{equation}
		\qLsecn = \gn - \l x \LCTconv \vL \r \n
		\label{eq:qlsec-n}
	\end{equation}
	with $\LCTop{Q}^{[2]}\w  = \LCTop{G}\w - \LCTop{X}\w \l 1 - \exp\rob{-\jmath \frac{\omega \Tcur}{b}} \r$ being the \DTLCT representation.
    We now express $x\n$ as
	\begin{equation}
		x\n = \gn -  \qLsecn + \underbrace{e^{-\jmath\frac{a\rob{2n - 1} \l \osrat T \r^2}{2b} } x\sqb{n - 1}}_{\tilde{x}\n}.
	\end{equation}
	In the \DTLCT domain, $x\n$ becomes \eqref{eq:sd2-lct-req}, whenever
	\begin{align*}
		\LCTop{X}\w = \frac{e^{-\jmath \frac{d\omega^2}{2b}}}{\Tcur} \LCTop{U}\w \LCTop{V} \w = \LCTop{U}\w \l 1 - e^{-\jmath \frac{\omega \Tcur}{b}} \r.
	\end{align*}
	
	We can now relate $\un$ with $\xn$ in the time domain, 
	\begin{align}
		\xn &= \l u \LCTconv \vL \r \n = \dnL\n \l \uparr{u} \conv v \r \n \notag\\
		&= \un - e^{-\jmath\frac{a\rob{2n - 1} \rob{\Tcur}^2}{2b}} u\sqb{n - 1}.
		\label{eq:xn-lctconv-vL}
	\end{align}
	From \eqref{eq:xn-lctconv-vL} we express $\un$ as
	\begin{equation}
		\un = \xn + {e^{-\jmath\frac{a\rob{2n - 1} \rob{\Tcur}^2}{2b}} u\sqb{n - 1}} \equiv x\n + {\tilde{u}\n}.
	\end{equation}
	Lastly, in analogy to the \FT case \cite{Daubechies:2003:J}, we set $\qLsecn$ as follows
	\begin{equation}
		\qLsecn = \csgn\rob{G\rob{\tilde{x}\n, \tilde{u}\n, \gn}}
	\end{equation}
	where $G\rob{x, u, g} = c_0 x + u + g$ with $c_0 = \sfrac{1}{2}$. Our construction of the 2nd\nobreakdash-order \LSDQ is shown in \figref{fig:lctsdq2}, and it is described by the following state equations:
	\begin{subequations}
		\label{eq:sdq2-lct-lemma}
		\begin{empheq}[box=\shadowbox*]{align}
			\xn &= \gn -  \qLsecn + e^{-\jmath\frac{a\rob{2n - 1} \rob{\Tcur}^2}{2b}} x\sqb{n - 1}\label{eq:xn2-lct-def}\\
			\un &=  \xn + e^{-\jmath\frac{a \rob{2n - 1} \rob{\Tcur}^2}{2b}} u \sqb{n - 1}\label{eq:un2-lct-def}\\
			\qLsecn &= \csgn \rob{c_0 \tilde{x}\n + \tilde{u}\n + \gn},\quad c_0 = \frac{1}{2}.\label{eq:qn2-lct-def}
		\end{empheq}
	\end{subequations}%
	The reconstruction of $g\t$ is obtained by low-pass filtering $\qLsecn$ with any low-pass kernel $\idker_\Omega \in \BOmLCT$
	\begin{equation}
		\grec\t = \osrat e^{-\jmath \frac{a t^2}{2b}} \sum\limits_{n \in \Z} \qLsecn \idker_\Omega \rob{\frac{t}{T} - n \osrat}.
	\end{equation}
	The approximation error $e = g- \tilde{g}$ measures contamination of $\LCTop{g}$ by filtered quantization noise.

	\section{\USF Acquisition}
	For conventional \SDQ to function properly, it is essential that $\infnorm{g\t} < 1$ (see Lemma~\ref{lem:BP}), and this limitation also applies to \LSDQ schemes. Whenever $\infnorm{g\t} > 1$, the reconstruction fails (see Fig. 3 in \cite{Graf:2019:C}). To prevent the quantizer from saturating, we leverage the \USF strategy and introduce modulo non-linearity prior to \LSDQ, defined by,
	\begin{equation}
		\Mmod_{\lambda} \rob{g} = 2\lambda \l \fracprt{\frac{g}{2\lambda} + \frac{1}{2}} - \frac{1}{2} \r, \ \ \
		\begin{array}
			{l}{\lambda\in\R^+}\\{\llbracket g \rrbracket = g - \lfloor g \rfloor}
		\end{array}.
	\end{equation}
	For complex-valued functions, $g\in\C$ we define,
	\begin{equation}
		\Mmod_{\lambda} \rob{g} = \Mmod_{\lambda} \l \Re\rob{g} \r + \jmath \Mmod_{\lambda} \l \Im \rob{g} \r,\quad\lambda\in\R^+.
	\end{equation}
	We can now design a novel \ob \USF acquisition scheme for \LCT-bandlimited signals as illustrated in \figref{fig:lctarch}, where the folding (with $\lambda<1$) is implemented in continuous-time, prior to digitization, ensuring that the quantizer never overloads. However, this deviates from the signal model in \secref{sec:1BLCT} and necessitates new recovery approaches that can decode $g$ from modulo encoded $\qLLn$.
    	
    \begin{figure}[!t]
    	\centering
		\scalebox{0.5}{
			\tikzset{every picture/.style={line width=0.75pt}} 
			
			\begin{tikzpicture}[x=0.75pt,y=0.75pt,yscale=-1,xscale=1]
				
				\draw [line width=1.5]    (289.2,211.35) -- (396.8,211.35) ;
				\draw [shift={(400.8,211.35)}, rotate = 180] [fill={rgb, 255:red, 0; green, 0; blue, 0 }  ][line width=0.08]  [draw opacity=0] (13.4,-6.43) -- (0,0) -- (13.4,6.44) -- (8.9,0) -- cycle    ;
				\draw  [line width=1.5]  (110.6,211.35) .. controls (110.6,206.24) and (114.74,202.1) .. (119.85,202.1) .. controls (124.96,202.1) and (129.1,206.24) .. (129.1,211.35) .. controls (129.1,216.46) and (124.96,220.6) .. (119.85,220.6) .. controls (114.74,220.6) and (110.6,216.46) .. (110.6,211.35) -- cycle ; \draw  [line width=1.5]  (110.6,211.35) -- (129.1,211.35) ; \draw  [line width=1.5]  (119.85,202.1) -- (119.85,220.6) ;
				\draw  [fill={rgb, 255:red, 255; green, 255; blue, 255 }  ,fill opacity=1 ][line width=1.5]  (247.35,190.6) -- (292.35,190.6) -- (292.35,232.1) -- (247.35,232.1) -- cycle ;
				
				\draw [line width=1.5]    (171.4,211.35) -- (242.4,211.35) ;
				\draw [shift={(246.4,211.35)}, rotate = 180] [fill={rgb, 255:red, 0; green, 0; blue, 0 }  ][line width=0.08]  [draw opacity=0] (13.4,-6.43) -- (0,0) -- (13.4,6.44) -- (8.9,0) -- cycle    ;
				\draw [line width=1.5]    (420.1,211.35) -- (446.35,211.35) ;
				\draw [shift={(450.35,211.35)}, rotate = 180] [fill={rgb, 255:red, 0; green, 0; blue, 0 }  ][line width=0.08]  [draw opacity=0] (13.4,-6.43) -- (0,0) -- (13.4,6.44) -- (8.9,0) -- cycle    ;
				\draw [line width=1.5]    (410.7,220.75) -- (410.7,255.6) -- (162.2,255.35) -- (162.16,224.6) ;
				\draw [shift={(162.15,220.6)}, rotate = 89.92] [fill={rgb, 255:red, 0; green, 0; blue, 0 }  ][line width=0.08]  [draw opacity=0] (13.4,-6.43) -- (0,0) -- (13.4,6.44) -- (8.9,0) -- cycle    ;
				\draw [line width=1.5]    (410.57,172.1) -- (410.57,197.95) ;
				\draw [shift={(410.57,201.95)}, rotate = 270] [fill={rgb, 255:red, 0; green, 0; blue, 0 }  ][line width=0.08]  [draw opacity=0] (13.4,-6.43) -- (0,0) -- (13.4,6.44) -- (8.9,0) -- cycle    ;
				\draw [line width=1.5]    (130.5,211.35) -- (148.8,211.35) ;
				\draw [shift={(152.8,211.35)}, rotate = 180] [fill={rgb, 255:red, 0; green, 0; blue, 0 }  ][line width=0.08]  [draw opacity=0] (13.4,-6.43) -- (0,0) -- (13.4,6.44) -- (8.9,0) -- cycle    ;
				\draw  [fill={rgb, 255:red, 255; green, 255; blue, 255 }  ,fill opacity=1 ][line width=1.5]  (582.3,120.1) -- (662.8,120.1) -- (662.8,161.96) -- (582.3,161.96) -- cycle ;
				
				\draw [line width=1.5]    (680.33,141) -- (680.33,274.33) -- (119.5,274.33) -- (119.82,224.6) ;
				\draw [shift={(119.85,220.6)}, rotate = 90.37] [fill={rgb, 255:red, 0; green, 0; blue, 0 }  ][line width=0.08]  [draw opacity=0] (13.4,-6.43) -- (0,0) -- (13.4,6.44) -- (8.9,0) -- cycle    ;
				\draw [shift={(680.33,141)}, rotate = 90] [color={rgb, 255:red, 0; green, 0; blue, 0 }  ][fill={rgb, 255:red, 0; green, 0; blue, 0 }  ][line width=1.5]      (0, 0) circle [x radius= 4.36, y radius= 4.36]   ;
				\draw [line width=1.5]    (662.93,140.35) -- (708.2,140.35) ;
				\draw [shift={(712.2,140.35)}, rotate = 180] [fill={rgb, 255:red, 0; green, 0; blue, 0 }  ][line width=0.08]  [draw opacity=0] (13.4,-6.43) -- (0,0) -- (13.4,6.44) -- (8.9,0) -- cycle    ;
				\draw  [line width=1.5]  (152.9,211.35) .. controls (152.9,206.24) and (157.04,202.1) .. (162.15,202.1) .. controls (167.26,202.1) and (171.4,206.24) .. (171.4,211.35) .. controls (171.4,216.46) and (167.26,220.6) .. (162.15,220.6) .. controls (157.04,220.6) and (152.9,216.46) .. (152.9,211.35) -- cycle ; \draw  [line width=1.5]  (152.9,211.35) -- (171.4,211.35) ; \draw  [line width=1.5]  (162.15,202.1) -- (162.15,220.6) ;
				\draw  [fill={rgb, 255:red, 255; green, 255; blue, 255 }  ,fill opacity=1 ][line width=1.5]  (246.35,54.6) -- (291.35,54.6) -- (291.35,96.1) -- (246.35,96.1) -- cycle ;
				
				\draw [line width=1.5]    (185.6,211.85) -- (185.6,87.27) ;
				\draw [shift={(185.6,83.27)}, rotate = 90] [fill={rgb, 255:red, 0; green, 0; blue, 0 }  ][line width=0.08]  [draw opacity=0] (13.4,-6.43) -- (0,0) -- (13.4,6.44) -- (8.9,0) -- cycle    ;
				\draw [shift={(185.6,211.85)}, rotate = 270] [color={rgb, 255:red, 0; green, 0; blue, 0 }  ][fill={rgb, 255:red, 0; green, 0; blue, 0 }  ][line width=1.5]      (0, 0) circle [x radius= 4.36, y radius= 4.36]   ;
				\draw  [line width=1.5]  (176.27,74.02) .. controls (176.27,68.91) and (180.41,64.77) .. (185.52,64.77) .. controls (190.63,64.77) and (194.77,68.91) .. (194.77,74.02) .. controls (194.77,79.13) and (190.63,83.27) .. (185.52,83.27) .. controls (180.41,83.27) and (176.27,79.13) .. (176.27,74.02) -- cycle ; \draw  [line width=1.5]  (176.27,74.02) -- (194.77,74.02) ; \draw  [line width=1.5]  (185.52,64.77) -- (185.52,83.27) ;
				\draw  [line width=1.5]  (402.1,73.91) .. controls (402.1,69.24) and (405.89,65.44) .. (410.57,65.44) .. controls (415.25,65.44) and (419.04,69.24) .. (419.04,73.91) .. controls (419.04,78.59) and (415.25,82.38) .. (410.57,82.38) .. controls (405.89,82.38) and (402.1,78.59) .. (402.1,73.91) -- cycle ; \draw  [line width=1.5]  (404.58,67.92) -- (416.56,79.9) ; \draw  [line width=1.5]  (416.56,67.92) -- (404.58,79.9) ;
				\draw [line width=1.5]    (410.57,113.1) -- (410.57,86.38) ;
				\draw [shift={(410.57,82.38)}, rotate = 90] [fill={rgb, 255:red, 0; green, 0; blue, 0 }  ][line width=0.08]  [draw opacity=0] (13.4,-6.43) -- (0,0) -- (13.4,6.44) -- (8.9,0) -- cycle    ;
				\draw  [line width=1.5]  (401.3,211.35) .. controls (401.3,206.16) and (405.51,201.95) .. (410.7,201.95) .. controls (415.89,201.95) and (420.1,206.16) .. (420.1,211.35) .. controls (420.1,216.54) and (415.89,220.75) .. (410.7,220.75) .. controls (405.51,220.75) and (401.3,216.54) .. (401.3,211.35) -- cycle ; \draw  [line width=1.5]  (404.05,204.7) -- (417.35,218) ; \draw  [line width=1.5]  (417.35,204.7) -- (404.05,218) ;
				\draw  [line width=1.5]  (492.1,140.85) .. controls (492.1,135.74) and (496.24,131.6) .. (501.35,131.6) .. controls (506.46,131.6) and (510.6,135.74) .. (510.6,140.85) .. controls (510.6,145.96) and (506.46,150.1) .. (501.35,150.1) .. controls (496.24,150.1) and (492.1,145.96) .. (492.1,140.85) -- cycle ; \draw  [line width=1.5]  (492.1,140.85) -- (510.6,140.85) ; \draw  [line width=1.5]  (501.35,131.6) -- (501.35,150.1) ;
				\draw [line width=1.5]    (68,211.35) -- (107.6,211.35) ;
				\draw [shift={(111.6,211.35)}, rotate = 180] [fill={rgb, 255:red, 0; green, 0; blue, 0 }  ][line width=0.08]  [draw opacity=0] (13.4,-6.43) -- (0,0) -- (13.4,6.44) -- (8.9,0) -- cycle    ;
				\draw [line width=1.5]    (195.6,74.02) -- (243.27,74.02) ;
				\draw [shift={(247.27,74.02)}, rotate = 180] [fill={rgb, 255:red, 0; green, 0; blue, 0 }  ][line width=0.08]  [draw opacity=0] (13.4,-6.43) -- (0,0) -- (13.4,6.44) -- (8.9,0) -- cycle    ;
				\draw [line width=1.5]    (292.27,73.91) -- (397.2,73.91) ;
				\draw [shift={(401.2,73.91)}, rotate = 180] [fill={rgb, 255:red, 0; green, 0; blue, 0 }  ][line width=0.08]  [draw opacity=0] (13.4,-6.43) -- (0,0) -- (13.4,6.44) -- (8.9,0) -- cycle    ;
				\draw [line width=1.5]    (410.57,65.44) -- (410.57,23.6) -- (185.27,23.93) -- (185.49,60.77) ;
				\draw [shift={(185.52,64.77)}, rotate = 269.65] [fill={rgb, 255:red, 0; green, 0; blue, 0 }  ][line width=0.08]  [draw opacity=0] (13.4,-6.43) -- (0,0) -- (13.4,6.44) -- (8.9,0) -- cycle    ;
				\draw [line width=1.5]    (87,211.25) -- (87,10.75) -- (546,10.75) -- (546.1,127.6) ;
				\draw [shift={(546.1,131.6)}, rotate = 269.95] [fill={rgb, 255:red, 0; green, 0; blue, 0 }  ][line width=0.08]  [draw opacity=0] (13.4,-6.43) -- (0,0) -- (13.4,6.44) -- (8.9,0) -- cycle    ;
				\draw [shift={(87,211.25)}, rotate = 270] [color={rgb, 255:red, 0; green, 0; blue, 0 }  ][fill={rgb, 255:red, 0; green, 0; blue, 0 }  ][line width=1.5]      (0, 0) circle [x radius= 4.36, y radius= 4.36]   ;
				\draw  [line width=1.5]  (491.85,211.2) -- (450.1,238.81) -- (450.1,183.6) -- cycle ;
				
				\draw [line width=1.5]    (492.35,211.2) -- (501.2,211.2) -- (501.2,154.1) ;
				\draw [shift={(501.2,150.1)}, rotate = 90] [fill={rgb, 255:red, 0; green, 0; blue, 0 }  ][line width=0.08]  [draw opacity=0] (13.4,-6.43) -- (0,0) -- (13.4,6.44) -- (8.9,0) -- cycle    ;
				\draw [line width=1.5]    (420.04,73.91) -- (501.2,73.91) -- (501.34,127.6) ;
				\draw [shift={(501.35,131.6)}, rotate = 269.85] [fill={rgb, 255:red, 0; green, 0; blue, 0 }  ][line width=0.08]  [draw opacity=0] (13.4,-6.43) -- (0,0) -- (13.4,6.44) -- (8.9,0) -- cycle    ;
				\draw  [line width=1.5]  (536.85,140.85) .. controls (536.85,135.74) and (540.99,131.6) .. (546.1,131.6) .. controls (551.21,131.6) and (555.35,135.74) .. (555.35,140.85) .. controls (555.35,145.96) and (551.21,150.1) .. (546.1,150.1) .. controls (540.99,150.1) and (536.85,145.96) .. (536.85,140.85) -- cycle ; \draw  [line width=1.5]  (536.85,140.85) -- (555.35,140.85) ; \draw  [line width=1.5]  (546.1,131.6) -- (546.1,150.1) ;
				\draw [line width=1.5]    (509.6,140.85) -- (532.85,140.85) ;
				\draw [shift={(536.85,140.85)}, rotate = 180] [fill={rgb, 255:red, 0; green, 0; blue, 0 }  ][line width=0.08]  [draw opacity=0] (13.4,-6.43) -- (0,0) -- (13.4,6.44) -- (8.9,0) -- cycle    ;
				\draw [line width=1.5]    (553.35,141.25) -- (579.6,141.25) ;
				\draw [shift={(583.6,141.25)}, rotate = 180] [fill={rgb, 255:red, 0; green, 0; blue, 0 }  ][line width=0.08]  [draw opacity=0] (13.4,-6.43) -- (0,0) -- (13.4,6.44) -- (8.9,0) -- cycle    ;
				
				\draw [color={rgb, 255:red, 0; green, 0; blue, 0 }  ,draw opacity=1 ][line width=2.25]    (100,240.2) -- (111,240.2) ;
				
				\draw (347.18,48.16) node  [font=\LARGE]  {$u\sqb{n - 1}$};
				\draw (217.35,48.16) node  [font=\LARGE]  {$u\n$};
				\draw (742.72,140.66) node  [font=\LARGE]  {$\qLsecn$};
				\draw (347.18,190.5) node  [font=\LARGE]  {$x\sqb{n - 1}$};
				\draw (159.02,167.5) node  [font=\LARGE]  {$x\n$};
				\draw (389.52,140.5) node  [font=\Large]  {$\exp \l -\jmath \frac{a\rob{2n-1}\rob{\Tcur}^{2}}{2b} \r$};
				\draw (47.02,211.5) node  [font=\LARGE]  {$\gn$};
				\draw (465.29,208.29) node  [font=\LARGE]  {$c_{0}$};
				\draw (268.85,75.35) node  [font=\LARGE]  {$z^{-1}$};
				\draw (622.55,141.03) node  [font=\LARGE]  {$\csgn\rob{\cdot}$};
				\draw (269.6,211.35) node  [font=\LARGE]  {$z^{-1}$};

			\end{tikzpicture}
		}
		\caption{2nd-order \LSDQ described using state equations in \eqref{eq:sdq2-lct-lemma}.}
		\label{fig:lctsdq2}
	\end{figure}

	\begin{figure}[!t]
		\centering
		\scalebox{0.5}{
			
			\tikzset {_i39cmfvp6/.code = {\pgfsetadditionalshadetransform{ \pgftransformshift{\pgfpoint{0 bp } { 0 bp }  }  \pgftransformrotate{-270 }  \pgftransformscale{2 }  }}}
			\pgfdeclarehorizontalshading{_3ee2j9du4}{150bp}{rgb(0bp)=(1,1,1);
				rgb(37.5bp)=(1,1,1);
				rgb(62.5bp)=(0.98,0.94,0.93);
				rgb(100bp)=(0.98,0.94,0.93)}
			\tikzset{every picture/.style={line width=0.75pt}} 
			
			\begin{tikzpicture}[x=0.75pt,y=0.75pt,yscale=-1,xscale=1]
				
				\path  [shading=_3ee2j9du4,_i39cmfvp6] (106.2,29.98) .. controls (106.2,24.75) and (110.44,20.5) .. (115.68,20.5) -- (386.42,20.5) .. controls (391.66,20.5) and (395.9,24.75) .. (395.9,29.98) -- (395.9,147.58) .. controls (395.9,152.82) and (391.66,157.06) .. (386.42,157.06) -- (115.68,157.06) .. controls (110.44,157.06) and (106.2,152.82) .. (106.2,147.58) -- cycle ; 
				\draw  [color={rgb, 255:red, 235; green, 199; blue, 199 }  ,draw opacity=1 ][line width=3]  (106.2,29.98) .. controls (106.2,24.75) and (110.44,20.5) .. (115.68,20.5) -- (386.42,20.5) .. controls (391.66,20.5) and (395.9,24.75) .. (395.9,29.98) -- (395.9,147.58) .. controls (395.9,152.82) and (391.66,157.06) .. (386.42,157.06) -- (115.68,157.06) .. controls (110.44,157.06) and (106.2,152.82) .. (106.2,147.58) -- cycle ; 
				
				\draw  [color={rgb, 255:red, 235; green, 199; blue, 199 }  ,draw opacity=1 ][fill={rgb, 255:red, 249; green, 239; blue, 237 }  ,fill opacity=1 ][line width=3]  (212.78,6.14) .. controls (212.78,4.83) and (213.84,3.76) .. (215.15,3.76) -- (286.95,3.76) .. controls (288.26,3.76) and (289.32,4.83) .. (289.32,6.14) -- (289.32,35.59) .. controls (289.32,36.9) and (288.26,37.96) .. (286.95,37.96) -- (215.15,37.96) .. controls (213.84,37.96) and (212.78,36.9) .. (212.78,35.59) -- cycle ;
				
				\draw  [fill={rgb, 255:red, 255; green, 255; blue, 255 }  ,fill opacity=1 ] (225.1,59.31) -- (279.93,59.31) -- (279.93,99.31) -- (225.1,99.31) -- cycle ;
				
				\draw  [fill={rgb, 255:red, 255; green, 255; blue, 255 }  ,fill opacity=1 ] (423.7,59.31) -- (497.73,59.31) -- (497.73,99.31) -- (423.7,99.31) -- cycle ;
				
				\draw  [line width=1.5]  (155.17,79.56) .. controls (155.17,73.49) and (160.1,68.56) .. (166.17,68.56) .. controls (172.25,68.56) and (177.17,73.49) .. (177.17,79.56) .. controls (177.17,85.64) and (172.25,90.56) .. (166.17,90.56) .. controls (160.1,90.56) and (155.17,85.64) .. (155.17,79.56) -- cycle ; \draw  [line width=1.5]  (158.39,71.79) -- (173.95,87.34) ; \draw  [line width=1.5]  (173.95,71.79) -- (158.39,87.34) ;
				\draw [line width=1.5]    (166.17,118.56) -- (166.17,93.56) ;
				\draw [shift={(166.17,90.56)}, rotate = 90] [color={rgb, 255:red, 0; green, 0; blue, 0 }  ][line width=1.5]    (14.21,-4.28) .. controls (9.04,-1.82) and (4.3,-0.39) .. (0,0) .. controls (4.3,0.39) and (9.04,1.82) .. (14.21,4.28)   ;
				\draw [line width=1.5]    (177.9,79.56) -- (221.74,79.56) ;
				\draw [shift={(224.74,79.56)}, rotate = 180] [color={rgb, 255:red, 0; green, 0; blue, 0 }  ][line width=1.5]    (14.21,-4.28) .. controls (9.04,-1.82) and (4.3,-0.39) .. (0,0) .. controls (4.3,0.39) and (9.04,1.82) .. (14.21,4.28)   ;
				\draw  [line width=1.5]  (326.17,79.56) .. controls (326.17,73.49) and (331.1,68.56) .. (337.17,68.56) .. controls (343.25,68.56) and (348.17,73.49) .. (348.17,79.56) .. controls (348.17,85.64) and (343.25,90.56) .. (337.17,90.56) .. controls (331.1,90.56) and (326.17,85.64) .. (326.17,79.56) -- cycle ; \draw  [line width=1.5]  (329.39,71.79) -- (344.95,87.34) ; \draw  [line width=1.5]  (344.95,71.79) -- (329.39,87.34) ;
				\draw [line width=1.5]    (337.17,118.56) -- (337.17,93.56) ;
				\draw [shift={(337.17,90.56)}, rotate = 90] [color={rgb, 255:red, 0; green, 0; blue, 0 }  ][line width=1.5]    (14.21,-4.28) .. controls (9.04,-1.82) and (4.3,-0.39) .. (0,0) .. controls (4.3,0.39) and (9.04,1.82) .. (14.21,4.28)   ;
				\draw [line width=1.5]    (279.33,79.56) -- (323.17,79.56) ;
				\draw [shift={(326.17,79.56)}, rotate = 180] [color={rgb, 255:red, 0; green, 0; blue, 0 }  ][line width=1.5]    (14.21,-4.28) .. controls (9.04,-1.82) and (4.3,-0.39) .. (0,0) .. controls (4.3,0.39) and (9.04,1.82) .. (14.21,4.28)   ;
				\draw [line width=1.5]    (348.17,79.56) -- (419.9,79.56) ;
				\draw [shift={(422.9,79.56)}, rotate = 180] [color={rgb, 255:red, 0; green, 0; blue, 0 }  ][line width=1.5]    (14.21,-4.28) .. controls (9.04,-1.82) and (4.3,-0.39) .. (0,0) .. controls (4.3,0.39) and (9.04,1.82) .. (14.21,4.28)   ;
				\draw [line width=1.5]    (498.06,79.56) -- (541.9,79.56) ;
				\draw [shift={(544.9,79.56)}, rotate = 180] [color={rgb, 255:red, 0; green, 0; blue, 0 }  ][line width=1.5]    (14.21,-4.28) .. controls (9.04,-1.82) and (4.3,-0.39) .. (0,0) .. controls (4.3,0.39) and (9.04,1.82) .. (14.21,4.28)   ;
				\draw [line width=1.5]    (80.44,79.56) -- (152.17,79.56) ;
				\draw [shift={(155.17,79.56)}, rotate = 180] [color={rgb, 255:red, 0; green, 0; blue, 0 }  ][line width=1.5]    (14.21,-4.28) .. controls (9.04,-1.82) and (4.3,-0.39) .. (0,0) .. controls (4.3,0.39) and (9.04,1.82) .. (14.21,4.28)   ;

				\draw (373.17,52.25) node  [font=\Large]  {$\dnarr{y}\t$};
				\draw (302.17,56.25) node  [font=\Large]  {$y\t$};
				\draw (199.17,55.25) node  [font=\Large]  {$\uparr{g}\t$};
				\draw (337.57,137.25) node  [font=\Large]  {$\dnL\t$};
				\draw (166.17,135.25) node  [font=\Large]  {$\upL\t$};
				\draw (574.97,79.75) node  [font=\Large]  {$\qLLn$};
				\draw (35.61,80.06) node  [font=\Large]  {$g\t\in \BOmLCT$};
				\draw (460.72,79.31) node  [font=\Large]  {\LSDQ};
				\draw (252.52,79.31) node  [font=\Large]  {$\Mmod$};
				\draw (251.05,21) node  [font=\LARGE,color={rgb, 255:red, 215; green, 35; blue, 38 }  ,opacity=1 ]  {\LM};

			\end{tikzpicture}
		}
		\caption{Proposed \ob \USF architecture in \LCT (\MLSDQ).}
		\label{fig:lctarch}
	\end{figure}

	\begin{SCfigure}
		\centering
		\includegraphics[width=0.6\textwidth]{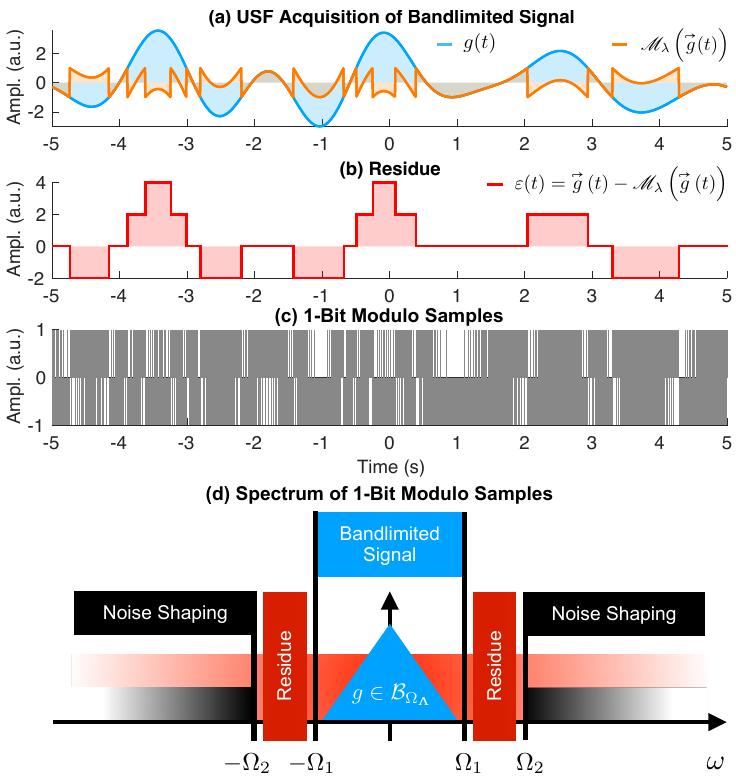}
		\caption{Modular decomposition property for $\Lmtx = \LmtxFT$ (Fourier transform). (a) USF acquisition of real-valued bandlimited signal $g\t$. (b) Residue obtained by expressing $\res\t$ from \eqref{eq:lct-mod-dec}. (c) \ob modulo samples $\qLLn$ acquired by \MLSDQ. (d) Deconstructing the spectrum of $\qLLn$ in \eqref{eq:qLn-dny}.}
		\label{fig:noiseshaping}
	\end{SCfigure}

	\bpara{Problem Statement.} Our goal is to recover  $g\in \BOmLCT$ from \ob modulo samples $\qLLn$ acquired via \MLSDQ.
	
	\section{Recovery Algorithm}
	In this section, we develop a new recovery algorithm for \MLSDQ, which jointly capitalizes on the transform domain separation of the signal and folds together with noise-shaping of quantization noise. The proposed algorithm is agnostic to the folding threshold ($\lambda$), unlike the time-domain methods proposed in \cite{Graf:2019:C, Florescu:2022:Ja, Eamaz:2024:J}. Although we use $\qLLn$ in our derivations, the same approach works with $\qLL^{[l]}\n$ captured by $l$th\nobreakdash-order schemes. The starting point for the recovery of $g\t \in \BOmLCT$ is the modular decomposition property \cite{Bhandari:2017:C}, which applies to arbitrary signals and enables us to write,
	\begin{align}
		g\t &= \dnarr{y}\t + \dnarr{\res}\t\label{eq:lct-mod-dec}\\
		&= \dnL\t \l \Mmod_{\lambda}\rob{\uparr{g}\t} + \sum\limits_{k=0}^{K-1}\ck \ind\rob{t - \nk \Tcur} \r.\notag
	\end{align}
	We explain the modular decomposition property in \figref{fig:noiseshaping} on the Fourier transform ($\Lmtx = \LmtxFT$) example. The input signal $g$ to \MLSDQ and the modulo signal $y$ are shown in \figref{fig:noiseshaping}(a). Residue $\res = \uparr{g} - y$ to be estimated is depicted in \figref{fig:noiseshaping}(b). One-bit modulo samples $\qLLn$ captured by \MLSDQ are shown in \figref{fig:noiseshaping}(c). Due to the superposition of $\uparr{g}$ and $\res$ in the \LCT domain, we expect to observe a separation as shown in \figref{fig:noiseshaping}(d), without the effect of quantization noise. This separation property has been previously leveraged in the case of Fourier  \cite{Bhandari:2021:J} and \LCT domain \cite{Zhang:2023:C}, respectively. In our case, this separation does not apply directly because of the contamination of quantization noise resulting from 1-Bit samples. However, by controlled oversampling, the noise-shaping property of \LSDQ can be utilized to our advantage to estimate the residue in the interval between $\Omega_1 = \sfrac{2 \pi M_1 b}{\tau}$ and $\Omega_2= \sfrac{2 \pi M_2 b}{\tau}$, shown in \figref{fig:noiseshaping}(d). Hence, despite the presence of heavy distortion arising from 1-Bit samples, we can still repurpose the transform domain separation. 

	Rewriting \eqref{eq:un-lct-proof} with $\dnarr{y}\t$ as an input produces,
	\begin{equation}
		\qLLn = \dnarr{y}\rob{n\Tcur} + e^{-\jmath\frac{a \rob{2n - 1} \rob{\Tcur}^2}{2b}} u \sqb{n - 1} - \un.
		\label{eq:qLn-dny}
	\end{equation}
	Next, we substitute $\dnarr{y}\t$ from  \eqref{eq:lct-mod-dec} in \eqref{eq:qLn-dny} and obtain
	\begin{align}
		\qLLn &= g\n - \dnarr{\res}\n + e^{-\jmath\frac{a \rob{2n - 1} \rob{\Tcur}^2}{2b}} u \sqb{n - 1} - \un \notag\\
		&= g\n - \dnarr{\res}\n + r_\Lmtx\n.
		\label{eq:qln-mdp-expanded}
	\end{align}
	By definition, $\res\n$ is a simple function and hence, its first-order difference $\diff{\res}\n$ results in a sparse representation with support $\{\nk\}_{k=0}^{K-1}$ and weights $\{\ck\}_{k=0}^{K-1}$. In what follows, we will elucidate that $\diff{\res}\n$ maps to a sum of complex exponentials in the \LCT domain, enabling their parameter estimation using known spectral estimation methods. To capitalize on this insight, we proceed by applying $\vLn$ to \eqref{eq:qln-mdp-expanded}, resulting in
	\begin{align}
		z\n &= \l \qLL \LCTconv \vL \r \n = \dnL \n \l \uparr{q}_{\Lmtx,\lambda} \conv v \r \n \notag\\
		&= \dnL\n \l \diam{g}_\Lmtx\n - \diff{\res}_\Lmtx\n + \diam{r}_\Lmtx \n \r
		\label{eq:wn-def}
	\end{align}
	where $\diam{o}_\Lmtx \DE \diff{(\uparr{o})}$. We decompose the sum in \eqref{eq:wn-def} as follows,
	\begin{equation}
		\begin{cases}
			& \dnL\n \diam{g}_\Lmtx\n\\
			-&\dnL\n\diff{\res}_\Lmtx\n = -\dnL\n\sum\nolimits_{k=0}^{K-1} \ck \delta\sqb{n - \nk}\\
			& \dnL\n\diam{r}_\Lmtx \n
		\end{cases}.
	\end{equation}
	Note that $\diam{g}_\Lmtx \in \BOmLCT \Rightarrow \dnL \diam{g}_\Lmtx \in \BOmLCT$. The second term $\dnL\diff{\res}_\Lmtx$ is completely parametrized by $\{\ck, \nk \}_{k=0}^{K-1}$. The last term $\dnL\diam{r}_\Lmtx$ is the effect of noise shaping. We can now analyze the constituent components in the transform domain.
	
	\bpara{Spectral Separation.} Let $\LCTop{z}\m$ denote the \DLCT of $z\n$ as defined in \eqref{eq:dlct-def}. We can partition $\LCTop{z}\m$ as follows
	\begin{equation}
		\LCTop{z}\m = 
		\begin{cases}
			\diam{\LCTop{g}}\m - \LCTop{\diff{\res}}\m + \diam{\LCTop{r}}\m &\abs{m} < M_1\\
			-\LCTop{\diff{\res}}\m + \diam{\LCTop{r}}\m &M_1 \leqslant \abs{m} < M_2\\
			- \LCTop{\diff{\res}}\m + \diam{\LCTop{r}}\m &M_2 \leqslant \abs{m}
		\end{cases}.
		\label{eq:wm-dlct}
	\end{equation}
	Spectral separation of $\LCTop{z}\m$ is shown in \figref{fig:noiseshaping}(d).

	\begin{algorithm}[!t]
		\caption{\ob USF beyond Fourier Domain.}
		\label{alg:algo-sum}
		\begin{algorithmic}[1]
			\renewcommand{\algorithmicrequire}{\textbf{Input:}}
			\renewcommand{\algorithmicensure}{\textbf{Output:}}
			\REQUIRE \ob modulo samples $\qLLn$ or $\qLLsecn$, number of folds $K$, sampling period $\Tcur$, bounds $M_1$, $M_2$ and $M_3$.
			\STATE Apply difference filter $\vLn$ to $\qLn$ as shown in \eqref{eq:wn-def}.
			\STATE Calculate \DLCT of $z\n$ using \eqref{eq:dlct-def}.
			\STATE Calculate $\LCTop{f}\m$ as shown in \eqref{eq:fm-def}.
			\STATE Solve \eqref{eq:ann-eq} to estimate $\left \{ \nk \right \}_{k=0}^{K-1}$.
			\STATE Obtain $\left \{ \ck \right \}_{k = 0}^{K - 1}$ from \eqref{eq:fm-def} using least-squares with $m \in [M_1, M_3]$.
			\STATE Reconstruct residue using $\tilde{\res}\n = \sum\nolimits_{k=0}^{n - 1} \diff{\res}\sqb{k}$.
			\STATE Recover $\qMBn = \qLn + \dnL\n  \tilde{\res}\n$.
			\STATE Low-pass filter $\qMBn$ to get $\tilde{g}\t$ as shown in \eqref{eq:mod-1b-low-pass}.
			\ENSURE  Signal reconstruction $\tilde{g}\t$.
		\end{algorithmic}
	\end{algorithm}

	\bpara{Residue Parameter Estimation.} The effect of noise shaping in \eqref{eq:wm-dlct} is that the contribution of $\diam{r}_\Lmtx$ in the interval between $0$ and $M_2$ is negligible. This enables the isolation of $\LCTop{\diff{\res}}\m$ from \eqref{eq:wm-dlct} in the interval between $M_1$ and $M_2$. We use this insight to estimate the residue. Note that by definition,
	\begin{equation}
		\LCTop{\diff{\res}}\m = \Tcur \sqrt{\frac{-\jmath}{\tau}} \sum\limits_{n=0}^{N-1} \dnarr{\diff{\res}}\n \kLinv \rob{m b \omega_0, n\Tcur}.
		\label{eq:z-lct-m}
	\end{equation}
	Then, by modulating \eqref{eq:z-lct-m} with $\frac{\sqrt{\tau}}{\Tcur\sqrt{-\jmath}} e^{-\jmath \frac{d\rob{m b \omega_0}^2}{2b}}$, we get,
	\begin{align}
		\LCTop{f}\m &= \frac{\LCTop{\diff{\res}}\m \sqrt{\tau}}{\Tcur \sqrt{-\jmath}} e^{-\jmath \frac{d\rob{m b \omega_0}^2}{2b}} = \sum\limits_{k=0}^{K-1} \ck e^{-\jmath \nk T m \omega_0}
		\label{eq:fm-def}
	\end{align}
	which is a \emph{sum of exponentials}. Hence, the frequencies corresponding to folding instants $\{ \nk \}_{k=0}^{K-1}$ can be estimated using Prony's method. To this end, let us denote by $\nvec{h}$ a $(K + 1)$-tap filter with z\nobreakdash-transform $H\rob{z} = \sum\nolimits_{n=0}^{K}h\n z^{-n} = \prod\nolimits_{k = 0}^{K - 1}\l 1 - r_k z^{-1}\r$ where $r_k = e^{-\jmath\nk T \omega_0}$. It is well known that $\mat{h}$ annihilates $\LCTop{f}\sqb{l},l \in [M_1 + K, M_2 - 1]$ \cite{Stoica:1997:B}, since
	\begin{equation}
		\l h \conv \LCTop{f} \r \sqb{l} = \sum\nolimits_{k=0}^{K - 1} c\sqb{k} r_{k}^{l - M_1} \sum\nolimits_{p=0}^{K} h\sqb{p}r_{k}^{-p} = 0.
		\label{eq:ann-eq}
	\end{equation}
	This can be algebraically rewritten as $\mat{T}(\nvec{\LCTop{f})}\nvec{h} = \nvec{0}$, where $\mat{T}({\nvec{\LCTop{f}}})$ is a $\rob{M_2 - M_1 - K} \times \rob{K + 1}$ Toeplitz matrix constructed from $\left \{ \LCTop{f}\m \right \}_{m=M_1}^{M_2 - 1}$ with length $|\Mn| = M_2 - M_1$ where $\Mn = [M_1,M_2].$ This system of equations can be solved when $|\Mn| \geqslant 2K$, which leads to the estimation of the filter $\mat{h}$. Folding locations $\left \{ \nk \right \}_{k=0}^{K-1}$ are obtained from filter roots $\{r_k\}_{k=0}^{K-1}$. Prony's method is known to be sensitive to perturbations, and robust solutions to \eqref{eq:ann-eq} can be achieved using \emph{high-resolution spectral estimation} methods, such as the Matrix Pencil Method (MPM) \cite{Hua:1990:J}. With $\{n_k\}_{k=0}^{K-1}$ known, the amplitudes $\left \{ \ck \right \}_{k}^{K-1}$ are estimated using least-squares (LS) inversion of the system of equations in \eqref{eq:fm-def}.

	\begin{SCfigure}
		\centering
		\includegraphics[width=0.6\textwidth]{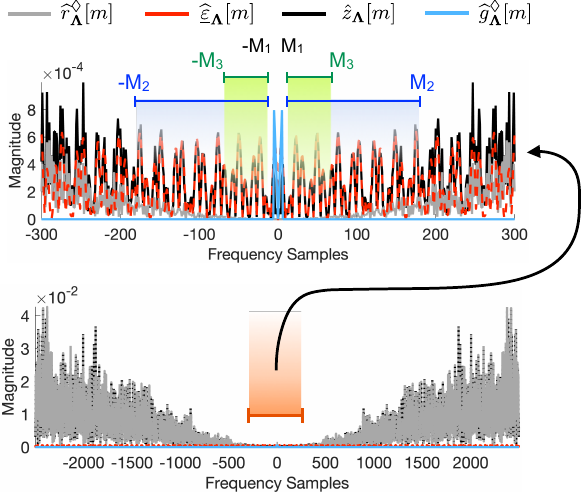}
		\caption{Spectral separation of $\LCTop{z}\m$ in \eqref{eq:wm-dlct} with bandwidth $M = 5$ and $\Lmtx = \LFT$ (Fourier transform). The input is captured by 1st-order \MLSDQ with $\lambda = 0.9$ and downchirped modulo signal $\dnarr{y}\t$ has $K = 4$ folds. $M_1$ and $M_2$ bound the interval for the estimation of $\{\nk\}_{k=0}^{K-1}$, and the interval between $M_1$ and $M_3$ is used to estimate $\{\ck\}_{k=0}^{K-1}$.}
		\label{fig:tworegions}
	\end{SCfigure}

	\bpara{Enhanced Estimation of Residue.} The estimates of $\{ \ck  \}_{k=0}^{K-1}$ rely on the estimates $\{ \nk  \}_{k=0}^{K-1}$, but require only $K$ equations. Let $\Mc = \left[ M_1,M_2 \right]$, then, as shown in \figref{fig:tworegions}, we have observed the estimation using a smaller interval $ M_3\in \mathsf{I} \subseteq \Mc, |\mathsf{I}|\geq K$ is an empirically superior approach.

	\bpara{Bandlimited Signal Reconstruction.} To recover $g\t$, we first obtain an estimate $\tilde{\res}\n$ via the anti-difference operator such that, $\tilde{\res}\n = \sum\nolimits_{k=0}^{n - 1} \diff{\res}\sqb{k}, n\geq 1$ where $\diff{\res}\n = \sum\nolimits_{k = 0}^{K-1}\ck \delta\sqb{n - \nk}$ with $\left \{ \ck, \nk \right \}_{k=0}^{K-1}$ obtained in the previous step. Adding demodulated residue estimate $\tilde{\res}\n$ to $\qLLn$ gives multi-bit samples $\qMBn = \qLLn + \dnL\n  \tilde{\res}\n$. To obtain the full reconstruction of $\tilde{g}\t$, we filter $\qMBn$ with the interpolation kernel $\idker$ with bandwidth $\Omega$
	\begin{equation}
		\tilde{g}\t = \osrat e^{-\jmath\frac{a t ^ 2}{2b}} \sum\limits_{n \in \Z} \uparr{q}_{\sf{MB}} \n \idker_{\OmL} \rob{\frac{t}{T} - \frac{n}{\osfact}}.
		\label{eq:mod-1b-low-pass}
	\end{equation}
	The designed algorithm is summarized in Algorithm \ref{alg:algo-sum}.

	\section{Numerical Experiments}

	\begin{SCfigure}
		\centering
		\includegraphics[width=0.65\textwidth]{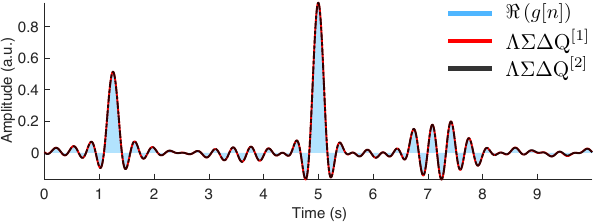}
		\caption{Reconstruction of a signal bandlimited in Fresnel Transform domain ($\Lambda_{\textsf{FrT}}, b = 2$) with $M = 30$ (Experiment 6 in \tabref{tab:lsdq-exps}). Reconstruction MSE achieved by 1st-order \LSDQ\;(\LSDQone) is $1.56 \times 10^{-6}$ and MSE achieved by 2nd-order \LSDQ\;(\LSDQtwo) is $4.33 \times 10^{-7}$.}
		\label{fig:lsdqexp6}
	\end{SCfigure}

	\bgroup
	\def\arraystretch{1.1}%
	\begin{table}[!t]
		\caption{\LSDQ Numerical Experiments.}
		\centering
		\resizebox{0.5\textwidth}{!}{%
			\begin{tabular}{|c|ccccc|}
				\hline
				Exp. & $\Lmtx$ & $M$ & $\osrat$ & $\MSEOp_1\rob{\nvec{g}, \nvec{\tilde{g}}}$ & $\MSEOp_2\rob{\nvec{g}, \nvec{\tilde{g}}}$ \\ \hline
				& & & $\times 10^{-3}$ & $\times 10^{-6}$ & $\times 10^{-7}$ \\ \hline
				$1$ & $\Lambda_{\mathsf{FT}}$ & $10$ & $6.67$ & $0.82$ & $6.79$\\ \hline
				$2$ & $\Lambda_{\theta}, \theta=\sfrac{\pi}{3}$ & $10$ & $6.67$ & $2.49$ & $6.95$\\ \hline
				$3$ & $\Lambda_{\mathsf{FrT}}, b=1$ & $10$ & $6.67$ & $2.14$ & $4.87$\\ \hline
				$4$ & $\Lambda_{\mathsf{FT}}$ & $30$ & $5.00$ & $0.71$ & $4.23$\\ \hline
                $5$ & $\Lambda_{\theta}, \theta=\sfrac{\pi}{16}$ & $30$ & $5.00$ & $1.28$ & $5.65$\\ \hline
                $6$ & $\Lambda_{\mathsf{FrT}}, b=2$ & $30$ & $5.00$ & $1.56$ & $4.33$\\ \hline
		\end{tabular}}%
		\label{tab:lsdq-exps}
	\end{table}
	\egroup		
	
\bgroup
	\def\arraystretch{1.2}%
	\begin{table*}[!t]
		\caption{\MLSDQ Numerical Experiments -- $M = 10$, $\MSEOp_{1}\rob{\nk, \tilde{n}_k} = 0$, $\MSEOp_{2}\rob{\nk, \tilde{n}_k} = 0$ (due to rounding) for all experiments.}
		\centering
		\resizebox{\textwidth}{!}{%
			\begin{tabular}{|c|ccccccccccccc|}
				\hline
				Exp. & $\Lambda_{LS}$ & $M_2 - M_1$ & $M_3 - M_1$ & $\osfact$ & $\infnorm{g}$ & $\lambda$ & $K$ & $\MSEOp_{1}\rob{\ck, \tilde{c}_k}$ & $\MSEOp_{1}\rob{\res, \tilde{\res}}$ & $\MSEOp_{1}\rob{g, \tilde{g}}$ & $\MSEOp_{2}\rob{\ck, \tilde{c}_{k}}$ & $\MSEOp_{2}\rob{\res, \tilde{\res}}$ & $\MSEOp_{2}\rob{g, \tilde{g}}$ \\ \hline
				& & & & $\times 10^{-4}$ & & & & $\times 10^{-7}$ & $\times 10^{-4}$ & $\times 10^{-6}$ & $\times 10^{-9}$ & $\times 10^{-4}$ & $\times 10^{-7}$\\ \hline
                $1$ & $\Lambda_{\mathsf{FT}}$ & $400$ & $80$ & $2.86$ & $1.90$ & $0.75$ & $10$ & $0.54$ & $3.06$ & $0.43$ & $8.79$ & $3.06$ & $1.19$\\ \hline
                $2$ & $\Lambda_{\theta}, \theta = \sfrac{\pi}{3}$ & $500$ & $80$ & $2.50$ & $2.73$ & $0.85$ & $18$ & $2.30$ & $6.21$ & $2.39$ & $5.12$ & $6.19$ & $2.33$\\ \hline
                $3$ & $\Lambda_{\theta}, \theta = \sfrac{\pi}{4}$ & $500$ & $120$ & $2.00$ & $3.74$ & $0.85$ & $22$ & $1.61$ & $6.12$ & $6.35$ & $8.46$ & $6.06$ & $2.26$\\ \hline
                $4$ & $\Lambda_{\theta}, \theta = \sfrac{\pi}{16}$ & $800$ & $200$ & $1.25$ & $6.69$ & $0.80$ & $42$ & $1.77$ & $6.44$ & $4.04$ & $6.43$ & $6.40$ & $2.99$\\ \hline
                $5$ & $\Lambda_{\mathsf{FrT}}, b = 1$ & $600$ & $150$ & $1.72$ & $2.83$ & $0.80$ & $16$ & $2.25$ & $3.38$ & $2.03$  & $4.55$ & $3.36$ & $1.03$\\ \hline
                $6$ & $\Lambda_{\mathsf{FrT}}, b = 2$ & $750$ & $180$ & $1.25$ & $3.83$ & $0.80$ & $38$ & $2.12$ & $5.80$ & $1.32$ & $6.83$ & $5.79$ & $2.61$\\ \hline
                $7$ & $\Lambda_{\mathsf{FrT}}, b = 3$ & $800$ & $200$ & $1.00$ & $5.49$ & $0.80$ & $42$ & $1.11$ & $5.13$ & $1.10$ & $1.94$ & $5.12$ & $2.08$\\ \hline
			\end{tabular}%
		}
		\label{tab:lct-sdq-num}
	\end{table*}
	\egroup

	\begin{figure*}[!t]
		\centering
		\includegraphics[width=0.98\linewidth]{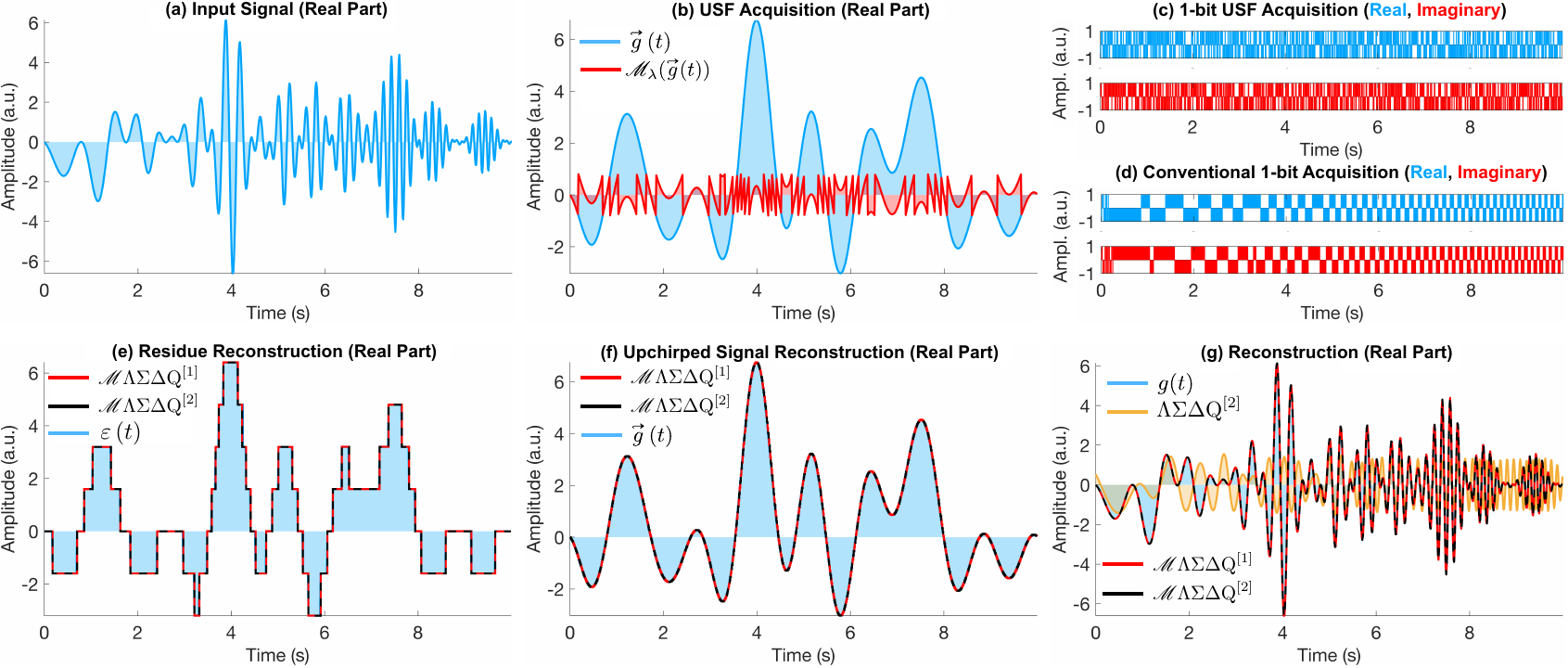}
		\caption{Visualization of acquisition (a)-(d) and reconstruction (e)-(g) for \MLSDQ for Experiment 4 in \tabref{tab:lct-sdq-num}. The 1st-order scheme is denoted by \MLSDQone, 2nd-order scheme by \MLSDQtwo, and \LSDQtwo refers to the second-order conventional scheme. (a) Input \FrFT-bandlimited signal ($\Lambda_\theta, \theta = \sfrac{\pi}{16}$). (b) \USF acquisition of the upchirped signal. (c) \ob modulo samples acquired using \MLSDQtwo. (d) Conventional \ob samples acquired using \LSDQtwo. (e) Residue reconstruction from \ob modulo samples. (f) Reconstruction of upchirped signal. (g) Recovery of the \HDR signal.}
		\label{fig:frftvisualexp}
	\end{figure*}

		\begin{SCfigure}
		\centering
		\includegraphics[width=0.6\textwidth]{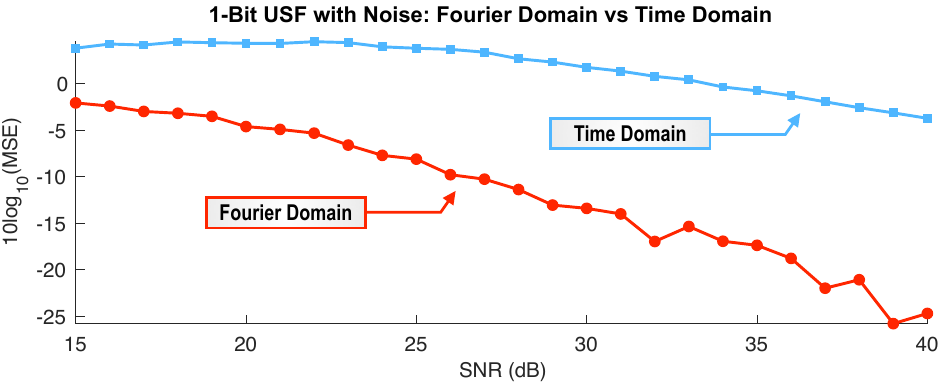}   
		\caption{Proposed Fourier-domain algorithm outperforms time-domain approach from \cite{Graf:2019:C} when the input signal is corrupted by noise. Setup parameters -- trials per each point $P = 1000$, folding threshold $\lambda = 1$, $\infnorm{g} = 4.5$ with $K = 8$ folds. Oversampling ratio $\osrat = 8\times 10^{-4}$ was determined using the criterion in \cite{Graf:2019:C}. Parameters for Fourier-domain method: $M_1 = 11$, $M_2 = 251$ and $M_3 = 50$.}
		\label{fig:noisyreccomparison}
	\end{SCfigure}

	We validate the proposed architectures and algorithm through numerical experiments using LCT\nobreakdash-bandlimited signals. We begin by applying 1st and 2nd-order \LSDQ and compare the reconstruction performance for signals with bounded inputs. Next, we conduct experiments with the \MLSDQ architecture using signals that significantly exceed the quantizer's threshold. We also examine how outband intervals $\Mn$ and $\Mc$, respectively, and the oversampling ratio $\osrat$ affect reconstruction performance. For the Fourier Transform case, we compare our recovery algorithm with the time-domain algorithm proposed in \cite{Graf:2019:C} and demonstrate the empirical robustness of our algorithm.

	\bpara{A. LCT Domain \ob Sampling.} We generate $g \in \BOmLCT$ with $\infnorm{g} \leqslant 1$ and pass it into the 1st and 2nd-order \LSDQ schemes. We perform numerical experiments using three different transforms and present the results in \tabref{tab:lsdq-exps}. The metric $\MSEOp_1(\nvec{g}, \nvec{\tilde{g}})$ represents the MSE for reconstruction of the 1st\nobreakdash-order scheme, while $\MSEOp_2(\nvec{g}, \nvec{\tilde{g}})$ denotes the MSE achieved by the 2nd\nobreakdash-order scheme. As expected, the 2nd\nobreakdash-order scheme reconstructs the signal with lower MSE due to its superior noise-shaping capability. The results for the real part from Experiment 6 are illustrated in \figref{fig:lsdqexp6}.

	\bpara{B. LCT Domain \ob Unlimited Sampling.} With $\infnorm{g} > 1$, non-USF \LSDQ (in \secref{sec:1BLCT}) would saturate, and hence conventional recovery fails. In such scenarios, we utilize our novel \MLSDQ architecture and demonstrate its capability through simulations for three transforms:  Fourier Transform (\FT), Fractional Fourier Transform (\FrFT), and Fresnel Transform (\Fr). We generate an input signal $g[n]$ such that $g \in \BOmLCT$ and $\infnorm{g} > 1$. The results are presented in \tabref{tab:lct-sdq-num}. Since the 2nd\nobreakdash-order \LSDQ offers better noise rejection, it outperforms the setup using the 1st\nobreakdash-order scheme. Experiment 4 is illustrated in \figref{fig:frftvisualexp}—the acquisition is shown in \figref{fig:frftvisualexp}(a)-(d), and the reconstruction is depicted in \figref{fig:frftvisualexp}(e)-(g).
	
	\begin{SCfigure}
		\centering
		\includegraphics[width=0.6\textwidth]{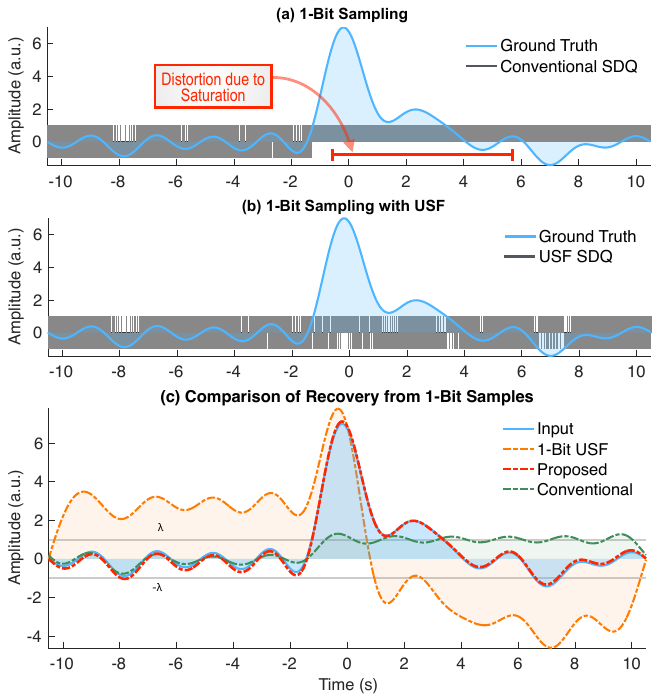}
		\caption{Recovery from undersampled 1-Bit modulo samples captured by \MLSDQone  with $\lambda = 1$ -- proposed Fourier-domain approach recovers input signal with $19\times$ smaller oversampling factor than the guarantee in \cite{Graf:2019:C}. The conventional \SDQ is overloaded, and the signal reconstruction is not possible.}
		\label{fig:tdvsfd}
	\end{SCfigure}
	
	\begin{figure*}[!t]
		\centering
		\includegraphics[width=\linewidth]{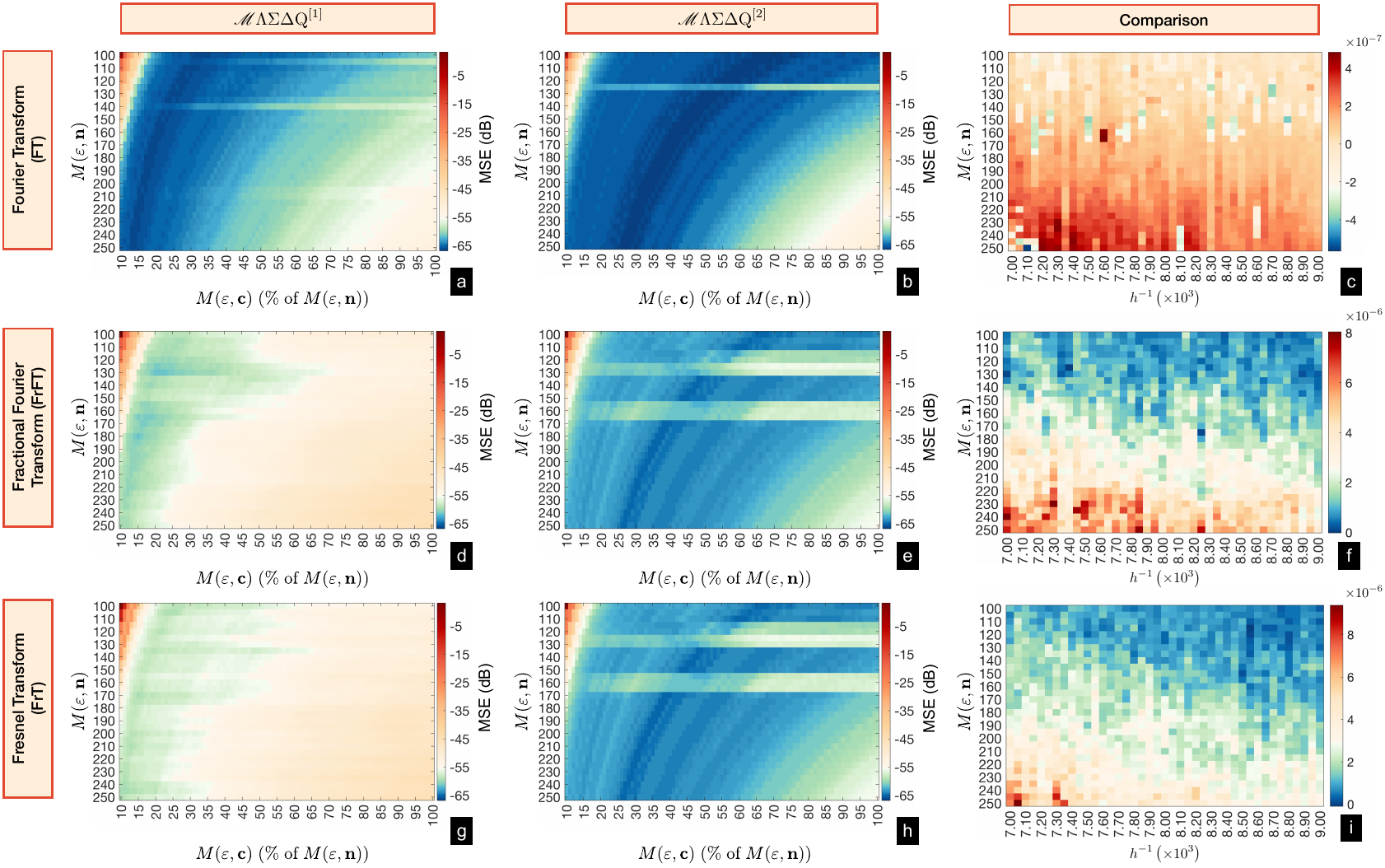}
		\caption{Dependence of reconstruction MSE of a signal with $M = 4$ on outband intervals $\Mn$, $\Mc$ and oversampling ratio $\osrat$ for \MLSDQ\;(1st-order scheme is denoted by \MLSDQone and 2nd-order scheme by \MLSDQtwo). Common parameters: $\lambda = 0.75$, $K = 10$ folds, $P = 100$ trials, $\Lambda_{\theta}, \theta=\sfrac{\pi}{3}$ for \FrFT and $b = 2$ for \Fr. The first two columns show average MSE for different $\Mn$ and $\Mc$ with $\osrat = 1.25 \times 10^{-4}$. For each $\Mn$, we sweep $\abs{\Mc}$ from $10\%$ to $100\%$ of $\abs{\Mn}$. Results for \MLSDQone (a), (d) and (g) suggest that the optimal $0.1 \abs{\Mn} \leq \abs{\Mc} \leq 0.4 \abs{\Mn}$ for most $\Mn$. For \MLSDQtwo (b), (e) and (h), the optimal choice of $0.15 \abs{\Mn} \leq \abs{\Mc} \leq 0.6 \abs{\Mn}$. The last column (c), (f) and (i) shows the dependence of the difference of average MSE for \MLSDQone and \MLSDQtwo on $\osrat$ and $\Mn$ with $\abs{\Mc} = 0.4 \abs{\Mn}$. As plots show, \MLSDQtwo is better, especially with lower oversampling.
		}
		\label{fig:foldcountmap}
	\end{figure*}

	\bpara{C. Comparison with \ob \USF in Time-Domain.} We compare our method with the time\nobreakdash-domain approach developed in \cite{Graf:2019:C}. Let us denote the noisy input as
	$$\bar{g}\n = g\n + w\n, \quad w \sim \mathcal{N}\rob{0,\sigma^2}$$
	where $w$ are i.i.d. samples drawn from a  Gaussian distribution with zero mean and variance $\sigma^2$. We omit chirp modulation since $\Lmtx=\LmtxFT$. We choose $\osrat$ to satisfy the recovery criterion given in \cite{Graf:2019:C} and set $\sigma^2 = P_s 10^{-\SNR / 10}$ where $P_s = {N^{-1}}\sum\nolimits_{n=0}^{N-1}\abs{g\n}^2$. We compare both algorithms based on average MSE, or
	$\MSEOp_{\textsf{avg}} = \frac{1}{PN}\sum\nolimits_{p = 0}^{P - 1}\sum\nolimits_{n=0}^{N-1} \abs{g_p\n - \tilde{g}_p\n}^2$ %
	where $P$ is the number of trials, $g_p\n$ denotes a randomly generated ground truth signal,
	\[
	g_p\n = \frac{\infnorm{g} \Re \l \Linv \sqb{\LCTop{g}\m} \r}{\infnorm{\Re \l \Linv \sqb{\LCTop{g}\m} \r}},\quad\LCTop{g}\m \in \Unif\l 0, 1\r
	\]
	where $\abs{m} \leq M$, $\Lmtx = \LmtxFT$ and $\infnorm{g}$ is the desired amplitude. Furthermore, $\tilde{g}_p\n$ is the corresponding signal reconstruction. \figref{fig:noisyreccomparison} shows the comparison of the two methods, where we used a second-order B-spline as a filtering kernel for the time\nobreakdash-domain method \cite{Graf:2019:C}. 
	We implement our method in Alg.~\ref{alg:algo-sum} with cyclic differences, and for spectral estimation (Step 4), we use the MPM technique \cite{Hua:1990:J}. As seen in \figref{fig:noisyreccomparison}, our transform-domain method achieves better reconstruction.  In contrast, the time\nobreakdash-domain approach relies on thresholding based on local information, which proves less stable in noisy scenarios and struggles with dense signal folds. We demonstrate that the proposed method reduces the required oversampling factor ($\sfrac{\osrat}{\osrat_{\textsf{TD}}}$). Results from three different experiments are presented in \tabref{tab:und-td-vs-fd}. The final experiment, illustrated in \figref{fig:tdvsfd}, shows that the proposed approach successfully reconstructs the input signal with $19\times$ lower oversampling compared to the time-domain method.
    
	\bgroup
	\def\arraystretch{1.2}%
	\begin{table}[!]
		\caption{Proposed Fourier-domain method outperforms time-domain approach of \cite{Graf:2019:C} when reconstructing from undersampled \ob modulo samples.}
		\centering
		\resizebox{0.5\textwidth}{!}{%
			\begin{tabular}{|c|cccccc|}
				\hline
				Exp. & $K$ & $\Mn$ & $\Mc$ & $\sfrac{\osrat}{\osrat_\textsf{TD}}$ & $\MSEOp_{\textsf{TD}}\rob{\nvec{g}, \nvec{\tilde{g}}}$ & $\MSEOp_{\textsf{FD}}\rob{\nvec{g}, \nvec{\tilde{g}}}$ \\ \hline
				$1$ & $2$ & $35$ & $7$ & $26.78$ & $0.53$ & $2.27\times 10^{-3}$ \\\hline
				$2$ & $4$ & $110$ & $38$ & $20.89$ & $0.96$ & $7.57\times 10^{-3}$ \\\hline
				$3$ & $8$ & $200$ & $32$ & $19.67$ & $8.23$ & $1.11\times 10^{-2}$ \\\hline
		\end{tabular}}%
		\label{tab:und-td-vs-fd}
	\end{table}
	\egroup
	
	\bpara{D. Effect of Outband Intervals $\Mn$ and $\Mc$.} To investigate the impact of $\{\Mn,\Mc\}$ on the reconstruction, we vary the length of $\{\Mn,\Mc\}$ for $P$ randomly generated inputs. For an $l$th-order quantizer, $\{\Mn,\Mc\}$, we calculate the average MSE
	\begin{equation*}
		\MSElnc{l} = \frac{1}{P}\sum\nolimits_{p=0}^{P-1}\MSEOp_l\rob{\nvec{g}_{p}, \tilde{\nvec{g}}_{p}\rnc}
	\end{equation*}
	where $\nvec{g}_p\in \BOmLCT$ and $\tilde{\nvec{g}}_{p}\rnc$ is reconstruction for given choice of $\{\Mn,\Mc,p\}$. The first two columns of \figref{fig:foldcountmap} show heatmaps with MSE for different $\{\Mn,\Mc\}$ and transforms: \FT, \FrFT, and \Fr. As expected, the heatmaps verify that  2nd\nobreakdash-order \MLSDQ achieves a smaller reconstruction MSE. Results show that optimal $\abs{\Mc} \in [0.1 \abs{\Mn}, 0.4 \abs{\Mn}]$ for the 1st\nobreakdash-order scheme and optimal $\abs{\Mc} \in [0.15 \abs{\Mn}, 0.6 \abs{\Mn}]$ for the 2nd\nobreakdash-order scheme for the majority of $\Mn$. Clearly, it is beneficial to use a larger window $\Mn$ to estimate folding locations and then use a smaller window $\Mc$ closer to the baseband interval to estimate the folding amplitudes. This is because frequency estimation is less sensitive to noise, and also choosing a smaller interval to estimate amplitudes implies smaller quantization noise.
	
	\bpara{E. Oversampling Ratio.} Here, we investigate the impact of the $\osrat$ and $\Mn$ on the reconstruction. For given $\osrat$, $\Mn$ and $l$-th order  quantizer, we define average MSE as:
	$
	\MSElhn{l} = \frac{1}{P}\sum\nolimits_{p=0}^{P-1}\MSEOp_{l}\rob{\nvec{g}_{p}, \tilde{\nvec{g}}_{p}\rhM}
	$ 
	where $\nvec{g}_p \in \BOmLCT$ for each trial $p$ and $\tilde{\nvec{g}}_{p}\rhM$ is the recovery for a given $\osrat$, $\Mn$ and $p$. We compare both schemes using the MSE difference
	$
	\MSEhndiff = \MSElhn{1} - \MSElhn{2}.
	$ 
	The third column of \figref{fig:foldcountmap} depicts $\MSEhndiff$ for an input signal with $M = 4$, $\abs{\Mc} = 0.4\abs{\Mn}$ and $K = 10$ folds for three different transforms: \FT in \figref{fig:foldcountmap} (c), \FrFT in \figref{fig:foldcountmap} (f), and \Fr in \figref{fig:foldcountmap} (i). In all the cases, the setup with 2nd-order \LSDQ\xspace achieves smaller reconstruction MSE for the majority of trials. The most significant performance improvement occurs in intervals with low oversampling (see \figref{fig:foldcountmap} (c), (f), (i)), where the 2nd-order scheme benefits from enhanced noise rejection.
	
	\section{Conclusion}
	In this paper, we present a \emph{generalized noise-shaping framework} with two key areas of generalization: (1) bandwidth and (2) dynamic range. By leveraging the flexibility of the Linear Canonical Transform (LCT) domain, we have expanded the applicability of \ob sampling to a wide range of signal classes beyond traditional Fourier domain assumptions. Our work extends the notion of bandwidth, enabling \ob sampling schemes to handle signals that may not be bandlimited in the Fourier domain but are bandlimited in other transform domains, such as the Fresnel, fractional Fourier, or Bargmann transforms. To address the overloading or saturation problem common in current \ob sampling methods, we incorporate the Unlimited Sensing Framework (USF). This novel approach separates key signal components—bandlimited input, modulo folds, and quantization noise—resulting in a new class of recovery algorithms optimized for transform-domain processing. Our approach outperforms existing time-domain methods, offering reduced oversampling requirements and enhanced robustness.
	
	The interplay of modulo and signum non-linearities at the core of our work opens several avenues for further research: (i) A deeper theoretical understanding of noise robustness in the transform domain is crucial for identifying the fundamental limits of dynamic range improvement. (ii) The joint utilization of time-domain and frequency-domain methods has the potential to significantly enhance system performance. (iii) Since oversampling is central to \ob sampling, the development of algorithms capable of efficiently handling large-scale data remains an important area of investigation.
	

\end{document}